\newcommand{\ve}[1]{\boldsymbol{#1}}
\newcommand{\name}{\text{SIC}}
\newcommand{\signal}{\text{coordination signal}}
\begin{document}

\title{Signal Instructed Coordination in Cooperative Multi-agent Reinforcement Learning}  


\author{Liheng Chen\textsuperscript{1,2$\dagger$}, Hongyi Guo\textsuperscript{1}, Yali Du\textsuperscript{3}, Fei Fang\textsuperscript{4$\ddagger$}, Haifeng Zhang\textsuperscript{3$\mathsection$}, Yaoming Zhu\textsuperscript{1}, \\
	Ming Zhou\textsuperscript{1}, Weinan Zhang\textsuperscript{1}, Qing Wang\textsuperscript{5}, Yong Yu\textsuperscript{1}}
\affiliation{%
	\institution{\textsuperscript{\rm 1}Shanghai Jiao Tong University, \textsuperscript{\rm 2}Tencent AI Lab, \textsuperscript{\rm 3}University College London, \\
		\textsuperscript{\rm 4}Carnegie Mellon University, \textsuperscript{\rm 5}Huya AI\\
		\textsuperscript{$\dagger$}clhbob@sjtu.edu.cn, 
		\textsuperscript{$\ddagger$}feif@cs.cmu.edu, 
		\textsuperscript{$\mathsection$}haifeng.zhang@ucl.ac.uk
	}
}

\renewcommand{\shortauthors}{Liheng Chen et al.}

%
%


%
%
%
%
%

\begin{abstract}  
In many real-world problems, a team of agents need to collaborate to maximize the common reward.
Although existing works formulate this problem into a centralized learning with decentralized execution framework, 
which avoids the non-stationary problem in training, 
their decentralized execution paradigm limits the agents' capability to coordinate.
Inspired by the concept of correlated equilibrium,
we propose to introduce a \textit{$\signal$} to address this limitation, 
and theoretically show that following mild conditions,
decentralized agents with the $\signal$ can coordinate their individual policies as manipulated by a centralized controller.
The idea of introducing $\signal$ is to encapsulate coordinated strategies into the signals,
and use the signals to instruct the collaboration in decentralized execution.
To encourage agents to learn to exploit the $\signal$,
we propose \textit{Signal Instructed Coordination} ($\name$), a novel coordination module that can be integrated with most existing MARL frameworks.
SIC casts a common signal sampled from a pre-defined distribution to all agents, 
and introduces an information-theoretic regularization to facilitate the consistency between the observed signal and agents' policies. 
Our experiments show that SIC consistently improves performance over well-recognized MARL models in both matrix games and a predator-prey game with high-dimensional strategy space.
\end{abstract}
\keywords{multi-agent reinforcement learning, coordination}  

\maketitle


\section{Introduction} \label{sec: intro}
Multi-agent interactions are common in real-world scenarios such as traffic control \cite{Nunes:2004:LMS:1018411.1018879}, smartgrid management \cite{Schneider:1999:DVF:645528.657645}, 
packet routing in networks \cite{weihmayer1994application}, 
and social dilemmas \cite{leibo2017multi}.
Motivated by these applications and inspired by the success of deep reinforcement learning (DRL) in single-agent settings \cite{mnih2015human}, 
there is a growing interest in deep multi-agent reinforcement learning (MARL)~\cite{lowe2017multi,foerster2018counterfactual,foerster2017stabilising,jaderberg2018human}, which studies how the agents can learn to act strategically by adopting RL algorithms.

In cooperative multi-agent environments, 
a straightforward approach is the \textit{fully centralized} paradigm, 
where a centralized controller is used to make decisions for all agents, 
and its policy is learned by applying successful single-agent RL algorithms.
However, the fully centralized method assumes an unlimited communication bandwidth, which is impractical in many real-world scenarios.
Besides, it suffers from exponential growth of the size of the joint action space with the number of agents.
Therefore, decentralized execution approaches are proposed, including the \textit{fully decentralized} paradigm and the \textit{centralized learning with decentralized execution} (CLDE) \cite{oliehoek2008optimal,lowe2017multi} paradigm.
The fully decentralized method models each participant as an individual agent with its own policy and critic conditioned on local information.
This setting fails to solve the non-stationary environment problem \cite{lanctot2017unified,matignon2012independent}, 
and is empirically deprecated by \cite{foerster2016learning,li2018deep}.
In CLDE framework, 
agents can leverage global information including the joint observations and actions of all agents in the training stage, e.g., through training a centralized critic, 
but the policy of an agent can only be dependent on the individual information and thus they can behave in the decentralized way in the execution stage.
This training paradigm bypasses the non-stationary problem~\cite{foerster2016learning,li2018deep}, 
and can lead to some coordination among the cooperative agents empirically~\cite{lowe2017multi}.

Despite the merits of CLDE, 
the feasible joint policy space with distributed execution is much smaller than the joint policy space with a centralized controller, 
limiting the agents' capability to coordinate.
For example, 
in a two-agent traffic system with agents A and B, 
whose individual action space is \{go, stop\}, 
we cannot find a joint policy that satisfies $P(\text{A goes \& B stops})=P(\text{A stops \& B goes})=0.5$
if both agents are making decisions independently.
Previous works \cite{sukhbaatar2016learning,peng2017multiagent,jiang2018learning,iqbal2018actor} adopts peer-to-peer communication mechanism to facilitate coordination, 
but they require specially designed communication channels to exchange information and the agents' capability to coordinate is limited by the accessibility and the bandwidth of the communication channel.

Inspired by the \textit{correlated equilibrium} (CE) \cite{leyton2008essentials,aumann1974subjectivity} concept in game theory, 
we introduce a \textit{$\signal$} to allow for more correlation of individual policies and to further facilitate coordination among cooperative agents in decentralized execution paradigms.
The $\signal$ is conceptually similar to the signal sent by a correlation device to induce CE.
It is sampled from a distribution at the beginning of each episode of the game and carries no state-dependent information. 
After observing the same signal, 
different agents learn to take corresponding individual actions to formulate an optimal joint action.
Such $\signal$ is of practical importance. 
For example, the previous traffic system example can introduce a traffic policeman that sends a public signal via his pose to each agent.
The type of the pose may be dependent on the current time (state-free) as a traffic light is, but agents can still coordinate their actions without any explicit communication among them. In addition, we prove that for a group of fully cooperative agents, if the signal's distribution satisfies some mild conditions, 
the joint policy space is equal to the centralized joint policy space.
Therefore, the coordination signal expands the joint policy space while still maintains the decentralized execution setting,
and is helpful to find a better joint policy.

To incentivize agents to make full use of the coordination signal, 
we propose \textit{Signal Instructed Coordination} ($\name$), 
a novel plug-in module for learning coordinated policies. 
In SIC, 
a continuous vector is sampled from a pre-defined normal distribution as the coordination signal, 
and every agent observes the vector as an extra input to its policy network. 
We introduce an information-theoretic regularization, which maximizes the mutual information between the signal and the resulting joint policy. 
We implement a centralized neural network to optimize the variational lower bound \cite{barber2003algorithm,chen2016infogan,li2017infogail} of the mutual information. 
The effects of optimizing this regularization are three-fold: 
it (i) encourages each agent to align its individual policy with the $\signal$,
(ii) decreases the uncertainty of policies of other agents to alleviate the difficulty to coordinate, 
and (iii) leads to a more diverse joint policy.
Besides, SIC can be easily incorporated with most models that follow the decentralized execution paradigm, 
such as MADDPG \cite{lowe2017multi} and COMA \cite{foerster2018counterfactual}.

To evaluate SIC, we first conduct insightful experiments on a multiplayer variant of matrix game \textit{Rock-Paper-Scissors-Well} \cite{rpsw} to demonstrate how $\name$ incentivize agents to coordinate in both one-step and multi-step scenarios. 
Then we conduct experiments on \textit{Predator-Prey}, 
a classic game implemented in multi-agent particle worlds \cite{lowe2017multi}. 
We empirically show that by adopting SIC, agents learn to coordinate by interpreting the signal differently and thus achieve better performance. 
Besides, the visualization of the distribution of collision positions in \textit{Predator-Prey} provides evidence that SIC improves the diversity of policies. 
An additional parameter sensitivity analysis manifests that SIC introduces stable improvement.

\section{Methods}
In this section, we formally propose our model. 
We first provide the background for multi-agent reinforcement learning. 
Then we analyze the superiority of signal instructed approach over previous paradigms. 
Finally, we introduce \textit{Signal Instructed Coordination} ($\name$) 
and its implementation details.

\subsection{Preliminaries}
We consider a fully cooperative multi-agent game with $N$ agents.
The game can be described as a tuple as $\langle \mathcal{I},\mathcal{S}, \mathcal{A},T,r,\gamma,\rho_0 \rangle$. 
Let $\mathcal{I}=\{1,2,\cdots,n\}$ denote the set of $n$ agents. 
$\mathcal{A}=\langle \mathcal{A}_1, \mathcal{A}_2, \cdots, \mathcal{A}_n\rangle$ is the joint action space of agents, 
and $\mathcal{S}$ is the global state space.
At time step $t$, 
the group of agents takes the joint action $\mathbf{a}_t=\langle a_{1t}, a_{2t}, \cdots, a_{nt}\rangle$ with each $a_{it}\in \mathcal{A}_i$ indicating the action taken by the agent $i$.
$T(s_{t+1}|s_t,\mathbf{a}_t):\mathcal{S}\times \mathcal{A}\times \mathcal{S}\rightarrow[0,1]$ is the state transition function. 
$r(s_t,\mathbf{a}_t):\mathcal{S}\times \mathcal{A}\rightarrow\mathbb{R}$ indicates the reward function from the environment. 
$\gamma\in[0,1)$ is a discount factor and $\rho_0:\mathcal{S}\rightarrow[0, 1]$ is the distribution of the initial state $s_0$. 

Let $\pi_i(a_{it}|s_t):\mathcal{S}\times \mathcal{A}_i\rightarrow[0,1]$ be a stochastic policy for agent $i$, 
and denote the joint policy of agents as $\bm{\pi}=\langle \pi_1,\pi_2,\cdots,\pi_n\rangle \in\Pi$ where $\Pi$ is the joint policy space.
Let $J(\bm{\pi})=\mathbb{E}_{\bm{\pi}}\left[\sum_{t=0}^{\infty}\gamma^t r_{t}\right]$ denotes the expected discounted sum of rewards, 
where $r_t$ is the reward received in time-step $t$ following policy $\bm{\pi}$. 
We aim to optimize the joint policy $\bm{\pi}$ to maximize $J(\bm{\pi})$. 

In this paper, we also consider environments with opponent agents. 
However, we focus on promoting cooperation among the controllable agents and regard the opponent agents as a part of the environment. 

\subsection{Joint Policy Space with Coordination Signal}
In the fully centralized paradigm, a centralized controller is used to manipulate a group of agents.
We denote $\Pi^C$ as the policy space of the centralized controller and $\bm{\pi}^C:\mathcal{S} \times \mathcal{A} \rightarrow [0,1]$ as a joint policy in $\Pi^C$. 
In the decentralized execution paradigm,
the agents make decisions independently according to their individual policies $\pi_i^D:\mathcal{S} \times \mathcal{A}_i \rightarrow [0,1]$.
We define the policy space of agent $i$ as $\Pi_i^D$ and the joint policy space as $\Pi^D=\Pi_1^D\times\Pi_2^D\times\cdots\times\Pi_n^D$, i.e. the Cartesian product of the policy spaces of each agent. For a joint policy $\bm{\pi}^D \in \Pi^D$, we have  
$\bm{\pi}^D(\bm{a}|s)=\pi_1^D(a_1|s)\cdot\pi_2^D(a_2|s)\cdot\cdots\cdot\pi_n^D(a_n|s)$, $\forall s \in \mathcal{S}$ and $\forall \bm{a}=\langle a_1, a_2, \cdots, a_n\rangle \in\mathcal{A}$.
We conclude the relation between $\Pi^C$ and $\Pi^D$ as the following theorem.

\begin{theorem}
    $\Pi^D$ is a subset of $\Pi^C$.
\end{theorem}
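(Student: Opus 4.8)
The plan is to prove the inclusion directly, by taking an arbitrary element of $\Pi^D$ and verifying that it satisfies every condition required for membership in $\Pi^C$. The key observation is that $\Pi^C$ is the space of \emph{all} valid conditional distributions over the joint action space $\mathcal{A}$ given a state, so it suffices to check that any decentralized (factorized) policy is itself a legitimate such distribution. In other words, the decentralized policies form the restricted subclass of product-form distributions, and the theorem asserts that this subclass sits inside the full space.

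First I would fix an arbitrary $\bm{\pi}^D \in \Pi^D$ and an arbitrary state $s \in \mathcal{S}$, and recall the factorization $\bm{\pi}^D(\bm{a}|s) = \prod_{i=1}^n \pi_i^D(a_i|s)$ for each joint action $\bm{a} = \langle a_1, \cdots, a_n \rangle$. Two properties need checking. Non-negativity and the upper bound follow immediately: since each factor $\pi_i^D(a_i|s) \in [0,1]$, their product also lies in $[0,1]$. The normalization is the only step that actually uses the factorized structure: summing over the joint action space and exploiting distributivity of the product over the sum gives
$$\sum_{\bm{a} \in \mathcal{A}} \bm{\pi}^D(\bm{a}|s) = \sum_{a_1 \in \mathcal{A}_1} \cdots \sum_{a_n \in \mathcal{A}_n} \prod_{i=1}^n \pi_i^D(a_i|s) = \prod_{i=1}^n \Bigl( \sum_{a_i \in \mathcal{A}_i} \pi_i^D(a_i|s) \Bigr) = 1,$$
where the last equality holds because each individual policy $\pi_i^D(\cdot|s)$ is itself a probability distribution over $\mathcal{A}_i$.

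Having verified both properties for an arbitrary $s$, I would conclude that $\bm{\pi}^D$ is a valid centralized joint policy, hence $\bm{\pi}^D \in \Pi^C$; since $\bm{\pi}^D$ was an arbitrary member of $\Pi^D$, this establishes $\Pi^D \subseteq \Pi^C$. I do not expect a genuine obstacle here, as the argument is elementary. The only conceptual point worth stating carefully is that $\Pi^C$ is unrestricted in form, so the inclusion really amounts to the observation that independent execution can only ever produce \emph{product} distributions over $\mathcal{A}$ — a proper subset of all joint distributions in general. This is precisely the structural gap the paper subsequently exploits to motivate the coordination signal.
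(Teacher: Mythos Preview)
Your proposal is correct and follows essentially the same approach as the paper: both take an arbitrary $\bm{\pi}^D\in\Pi^D$ and observe that its product form already defines a valid element of $\Pi^C$. The only difference is granularity---the paper simply asserts the construction, while you explicitly verify non-negativity and normalization of the product distribution; this extra care is fine but not a different argument.
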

\begin{proof}
    For $\forall\bm{\pi}^D\in\Pi^D$, 
    we can construct $\bm{\pi}^C\in\Pi^C$, 
    which suffices that for $\forall s\in \mathcal{S}$, $\forall \bm{a}=\langle a_1, a_2, \cdots, a_n\rangle \in \mathcal{A}$,  
    \begin{equation}
        \bm{\pi}^C(\bm{a}|s)=\pi^D_1(a_1|s)\cdot\pi^D_2(a_2|s)\cdot\cdots\cdot\pi^D_n(a_n|s)=\bm{\pi}^D(\bm{a}|s),\nonumber
    \end{equation}
    i.e. $\bm{\pi}^C=\bm{\pi}^D$.
    Therefore, we have $\bm{\pi}^D\in\Pi^C$ and $\Pi^D \subseteq \Pi^C$.
\end{proof}
    We use a counterexample to show that 
    not every joint policy in $\Pi^C$ is an element of $\Pi^D$.
    In a two-agent system where each agent has two actions $x$ and $y$, 
    $\forall s\in\mathcal{S}$, 
    $\exists \bm{\pi}^C\in\Pi^C$ that suffices
    \begin{equation}
        \bm{\pi}^C(\langle a_1=x, a_2=x\rangle|s)=\bm{\pi}^C(\langle a_1=y, a_2=y\rangle|s)=0.5, \nonumber
    \end{equation}
    but $\bm{\pi}^C \notin \Pi^D$,
    because there is no valid solution for 
    \begin{eqnarray}
        &&\pi_1(a_1=x|s) \cdot \pi_2(a_2=x|s) \nonumber\\
        &=&\pi_1(a_1=y|s) \cdot \pi_2(a_2=y|s) \nonumber\\
        &=&(1-\pi_1(a_1=x|s)) \cdot (1-\pi_2(a_2=x|s))\nonumber \\
        &=&0.5. 
    \end{eqnarray}
Since $\Pi^C$ includes $\Pi^D$,
the best joint policy in $\Pi^C$ is superior or equal to the best joint policy in $\Pi^D$.
However, due to computational complexity concerns, 
the decentralized execution paradigm is more practical in large-scale environments.
Taking into account the superiorities of both centralized and decentralized paradigms, we have motivation to propose a new framework which has centralized policy space $\Pi^C$ and is executed in a decentralized way.

\begin{figure}[t!]
	\centering
	\includegraphics[width=0.48\textwidth]{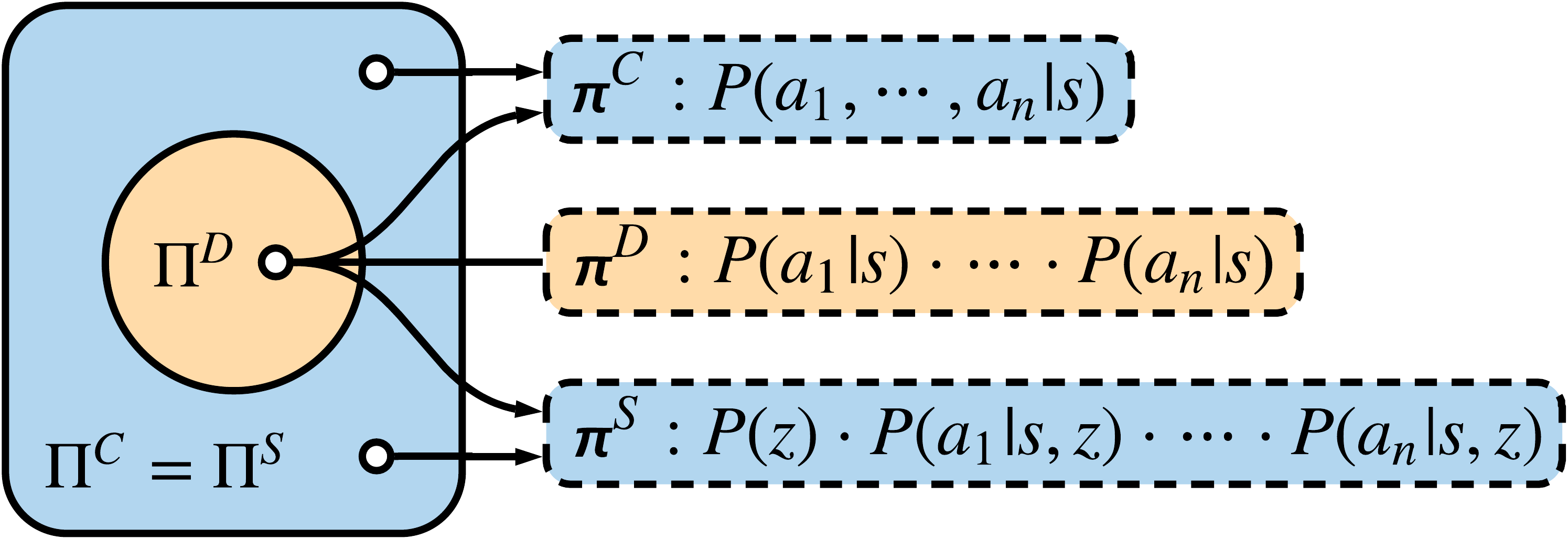}
	\caption{The relationship among $\Pi^C$, $\Pi^D$ and $\Pi^S$ is $\Pi^C=\Pi^S\supseteq\Pi^D$.
	The white circle represents an element in the set.}
	\label{fig:ne-utility-relationship}
\end{figure}

From a game-theoretic perspective, 
the decentralized agents try to reach a Nash equilibrium (NE),
with each individual policy as a best response to others' policies.
Previous studies on computational game theory show that by following the signal provided by a \textit{correlation device}, 
agents may reach a more general type of equilibrium, \textit{correlated equilibrium} (CE) \cite{leyton2008essentials},
which can potentially lead to better outcomes for all agents \cite{aumann1974subjectivity, jiang2011polynomial,farina2019correlation,ashlagi2008value}.
Inspired by CE, 
we propose a \textit{signal instructed} framework,
which provides a larger equilibrium space and maintains the decentralized execution scheme for ease of training.
We introduces a coordination signal sent to every agent at the beginning of a game,
which is conceptually close to the signal sent by the correlation device in CE.

The usage of the signal changes $\Pi^C$ to a different joint policy space, $\Pi^S$.
Every agent observes the same signal $z\in\mathcal{Z}$ sampled from a distribution $P_z$,
where $\mathcal{Z}$ is the \textit{signal space}, 
and learns an individual policy as $\pi_i^S(a_i|s, z)$.
Therefore, 
the agents formulate a special joint policy, 
$\bm{\pi}^S$, 
which suffices that
$\forall s \in \mathcal{S}$, 
$\forall \bm{a}=\{a_1, a_2, \cdots, a_n\}\in\mathcal{A}$,
and $\forall z\in\mathcal{Z}$,
$\bm{\pi}^S(\bm{a}, z|s)=P_z(z)\cdot\pi_1^S(a_1|s, z)\cdot\pi_2^S(a_2|s, z)\cdot\cdots\cdot\pi_n^S(a_n|s, z)$.
$\bm{\pi}^S$ is a conditional joint distribution of $\bm{a}$ and $z$,
and differs from aforementioned types of joint policies.
However, 
by regarding $z$ as an extension of global state, 
we do not change the way we model the individual policy of each agent, 
which is still $\pi_i^S(a_i|s')$ with $s'=(s, z)$.

In signal instructed approach, 
all agents observe the same $z$, 
and we assume that every agent follows the instruction of $z$, i.e.,
takes only one specific corresponding action $a^z_i$.
In other words, 
every agent's policy is
\begin{equation}
    \pi_i(a_i|s, z)=\begin{cases}
    1 & a_i = a_i^z \\
    0 & \text{otherwise}.
    \end{cases} \nonumber
\end{equation}
We denote the corresponding joint action as $\bm{a}^z=\{a_1^z, a_2^z, \cdots, a_n^z\}$.
Intuitively, this assumption is like that agents make an ``agreement'' on which joint action to take in current state when observing $z$, 
which is common in real-world scenarios.
For example, in a traffic junction,
a traffic policeman standing in the center directs the traffic by posing to cars in all directions.
The cars can tell whether they should accelerate or stop from the same observed pose, 
even if the policeman changes his pose in accord to time (state-free).
The agents can be regarded as reaching a CE.
Following the assumption still results in a stochastic joint policy,
with the stochasticity conditioned on $z$ now.
With this assumption, we derive Theorem \ref{them: pisic=pim}.
\begin{theorem} \label{them: pisic=pim}
	$\Pi^S$ is equal to $\Pi^C$.
\end{theorem}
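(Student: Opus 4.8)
The plan is to establish the set equality through the two inclusions $\Pi^S\subseteq\Pi^C$ and $\Pi^C\subseteq\Pi^S$. The forward inclusion is the routine one. Given any signal-instructed joint policy, I would marginalize out the signal under $P_z$; by the deterministic ``instruction-following'' assumption stated just before the theorem, the product of the individual policies $\prod_i\pi_i^S(a_i|s,z)$ collapses to the indicator that the recommended joint action $\bm{a}^z$ equals $\bm{a}$, so the effective policy is
\begin{equation}
\bm{\pi}^S(\bm{a}|s)=P_z(\{z\in\mathcal{Z}:\bm{a}^z=\bm{a}\}). \nonumber
\end{equation}
For each fixed $s$ this is a nonnegative function of $\bm{a}$ whose values sum to one, hence a legitimate conditional distribution over $\mathcal{A}$ and therefore an element of $\Pi^C$.

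The substance of the theorem is the reverse inclusion $\Pi^C\subseteq\Pi^S$, which is where I would concentrate. The idea is to use the signal as a private recommendation of a \emph{full} joint action and have each agent merely read off its own coordinate. Given an arbitrary $\bm{\pi}^C(\cdot|s)\in\Pi^C$, for each state $s$ I would construct a measurable partition $\{Z_s^{\bm{a}}\}_{\bm{a}\in\mathcal{A}}$ of the signal space satisfying $P_z(Z_s^{\bm{a}})=\bm{\pi}^C(\bm{a}|s)$, define the recommendation map $g_s(z)=\bm{a}$ whenever $z\in Z_s^{\bm{a}}$, and let agent $i$ adopt the deterministic policy that outputs the $i$-th coordinate of $g_s(z)$. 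Since every agent observes the same $z$, the jointly induced action is exactly $g_s(z)$, and marginalizing reproduces $\bm{\pi}^S(\bm{a}|s)=P_z(Z_s^{\bm{a}})=\bm{\pi}^C(\bm{a}|s)$. Thus the arbitrary centralized policy is realized inside $\Pi^S$, and combined with the forward inclusion this yields $\Pi^S=\Pi^C$.

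The main obstacle is guaranteeing the existence of such a partition with the prescribed masses, and this is exactly the role of the ``mild conditions'' on $P_z$. I expect to argue that whenever $P_z$ is atomless---as for the continuous (e.g.\ normal) signal used later in the paper---the carving is always possible: the action set $\mathcal{A}$ is finite, so I can enumerate the target masses $\bm{\pi}^C(\bm{a}|s)$, and a probability-integral-transform / intermediate-value argument on the continuous distribution yields measurable blocks of any prescribed measure summing to one. One subtlety I would flag explicitly is that the partition is permitted to depend on $s$: the same signal realization may be decoded into different joint actions in different states, which is legitimate because each $\pi_i^S$ conditions on the pair $(s,z)$, so no single state-independent carving of $\mathcal{Z}$ is required.
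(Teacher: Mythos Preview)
Your argument is correct and follows the same two-inclusion strategy as the paper: show $\Pi^S\subseteq\Pi^C$ by marginalizing out $z$ using the deterministic instruction-following assumption, and show $\Pi^C\subseteq\Pi^S$ by allocating signal mass to joint actions so that each agent reads off its own coordinate. Your version is in fact more careful than the paper's on two points the paper glosses over: you make explicit that the partition $\{Z_s^{\bm a}\}$ may depend on $s$ (legitimate because each $\pi_i^S$ conditions on $(s,z)$), whereas the paper simply writes ``assign a signal $z$ to $\bm a$ with $P_z(z)=\bm\pi^C(\bm a|s)$,'' which as stated makes $P_z$ state-dependent; and you isolate the atomlessness hypothesis on $P_z$ needed to guarantee blocks of arbitrary prescribed mass, which the paper leaves implicit under the phrase ``mild conditions.''
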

\begin{proof}
    We prove this theorem in two steps:
    \begin{enumerate}
    \item $\Pi^S\supseteq\Pi^C$: 
    $\forall\bm{\pi}^C\in\Pi^C$,
    we can construct $\bm{\pi}\in\Pi^S$, 
    which suffices that $\forall s\in \mathcal{S}$, $\forall \bm{a}=\{a_1, a_2, \cdots, a_n\}\in \mathcal{A}$, we assign a signal $z\in \mathcal{Z}$ to $\bm{a}$ with $P_z(z)=\bm{\pi}^C(\bm{a}|s)$, s.t.
    \begin{eqnarray}
        \bm{\pi}(z, \bm{a}|s) &=& P_z(z)\bm{\pi}(\bm{a}|z, s) \nonumber \\
        & = & P_z(z)[\pi_1(a_1|s, z)\cdot\pi_2(a_2|s, z)\cdot\cdots
        \pi_n(a_n|s, z)] \nonumber\\
        &=& P_z(z)[1\cdot1\cdot\cdots\cdot1] = \bm{\pi}^C(\bm{a}|s)\nonumber
    \end{eqnarray}
    Therefore, we have $\bm{\pi}=\bm{\pi}^C\in\Pi^C$.
    Since every joint policy in $\Pi^C$ is also an element of $\Pi^S$, 
    we conclude that $\Pi^S\supseteq\Pi^C$.
    \item $\Pi^S\subseteq\Pi^C$: 
    $\forall\bm{\pi}^S\in\Pi^S$, 
    we can construct $\bm{\pi}\in\Pi^C$, 
    which suffices that $\forall s\in \mathcal{S}$, $\forall \bm{a}=\{a_1, a_2, \cdots, a_n\}\in \mathcal{A}$, 
    $\forall z\in \mathcal{Z}$,  
    \begin{equation}
        \bm{\pi}(\bm{a}|s) =
        \begin{cases}
        P_z(z) & \bm{a}=\bm{a}^z \\
        0 & \text{otherwise}.
        \end{cases} \nonumber
    \end{equation}
    Therefore, we have $\bm{\pi}=\bm{\pi}^S\in\Pi^S$.
    Since every joint policy in $\Pi^S$ is also an element of $\Pi^C$, 
    we conclude that $\Pi^S\subseteq\Pi^C$.
    \end{enumerate}
    Given $\Pi^S\subseteq\Pi^C$ and $\Pi^S\supseteq\Pi^C$, 
    we conclude that $\Pi^S=\Pi^C$.
\end{proof}
This theorem shows that the signal instructed method enlarged the joint policy space to the same size as $\Pi^C$.
Therefore, agents can still act with their decentralized individual policies,
while exploiting a larger joint policy space.
Note that if agents do not follow the signal and take stochastic individual policies, 
$\Pi^S$ may be even larger due to the additional variable $z$.
However, this often results in miscoordination of agents and is undesired.
We illustrates the relationship among different joint policy spaces in Fig. \ref{fig:ne-utility-relationship}.

A practical concern is that according to the assumption, 
agents need to assign every joint action to a specific $z$,
which means that the size of $\mathcal{Z}$ is as large as $\mathcal{A}$.
Fortunately, 
the set of optimal joint actions is often small, 
and hence an optimal $\bm{\pi}^S$ needs only a small subspace of $\mathcal{Z}$ to instruct agents to take those optimal joint actions.
Besides, we can lower the complexity with neural networks as demonstrate in Sec. \ref{sec: parameter sensitivity}.

In addition to our signal instructed method, 
previous works \cite{sukhbaatar2016learning,peng2017multiagent,jiang2018learning,iqbal2018actor} also propose to introduce the communication mechanism, 
which encourages agents to coordinate in the decentralized execution approach by exchanging communication vectors.
However, in communication agents have to execute three tasks to accomplish successful coordination:
sending meaningful signals (signaling), 
receiving the right signal (receiving),
and interpreting it correctly (listening).
\cite{jiang2018learning} points out that the receiving process suffers from the \textit{noisy channel} problem, 
which arises when all other agents use the same communication channel to simultaneously send information to one agent.
The agent needs to distinguish useful information from useless or irrelevant noise.
In addition, \cite{lowe2019pitfalls} shows that successful signaling does not necessarily lead to effective listening.
Therefore, it is not easy to explore better joint policy through communication in complex scenarios.
Unlike these peer-to-peer communication approaches, 
the signal instructed approach improves coordination in a top-to-bottom way.
It circumvents signaling and receiving issues,
and focuses on listening stage as we will discuss in the next section. 

\subsection{Signal Instructed Coordination} \label{sec: sic}
When a $\signal$ is observed, 
how to incentivize agents to follow its instruction and coordinate is a critical issue. 
As a $\signal$ is sampled from $P_z$ and carries no state-dependent information, 
it is possible for agents to treat it as random noise and ignores it during the training process.
Our idea is to facilitate the $\signal$ to be entangled with agents' behaviours and thus encourage the coordination in execution. 
We name our method as \textit{Signal Instructed Coordination  ($\name$)}.
SIC introduces an information-theoretic regularization to ensure the signal makes an impact in agents' decision making.
This regularization aims to maximize the mutual information between the signal, $z$, and the joint policy, $\bm{\pi}^S$, given current state, $s$, as
\begin{eqnarray}
I(z; \bm{\pi}^S(\bm{a}, z|s)) & = &- H(\bm{\pi}^S|z) +  H(\bm{\pi}^S)  \nonumber\\
& = & - H(\pi^S_i, \bm{\pi}^S_{-i}|z) + H(\bm{\pi}^S)  \nonumber\\
& = & - H(\pi^S_i|z)  - H(\bm{\pi}^S_{-i}|\pi^S_i, z) + H(\bm{\pi}^S), \label{eq: original mutual information}
\end{eqnarray}
where $\bm{\pi}^S$, $\pi^S_i$ and $\bm{\pi}^S_{-i}$ are abbreviations for $\bm{\pi}^S(\bm{a}, z|s)$, $\pi^S_i(a_{i}, z|s)$ and $\bm{\pi}^S_{-i}(\bm{a}_{-i}, z|s)$, and $\bm{\pi}^S_{-i}$ and $\bm{a}_{-i}$ are the joint policy and the joint action of all agents except agent $i$ respectively.
The decomposition of $\bm{\pi}^S$ into $\pi^S_i$ and $\bm{\pi}^S_{-i}$ in the second line holds in our decentralized approach.

Through decomposing the regularization term in Eq. \eqref{eq: original mutual information}, 
one can find that the effects for minimizing the mutual information between signal and policy are threefold. 
Minimizing the first term increases consistency between the coordination signal and the individual policy to suffice the assumption of Theorem \ref{them: pisic=pim}. 
Minimizing the second term ensures low uncertainty of other agents' policies,
which is beneficial to establish coordination among agents. 
Maximizing third term encourages the joint policy to be diverse, which prohibits the opponents from inferring our policy in competition.
These effects in combination improves the performance of the joint policy.
However, directly optimizing Eq. \eqref{eq: original mutual information} is troublesome in implementation.
Considering the symmetry property of mutual information, we aim to maximize
\begin{eqnarray}
I(z; \bm{\pi}^S(\bm{a}, z|s)) & = & - H(z|\bm{\pi}^S(\bm{a}, z|s)) + H(z) \nonumber\\
& = & \mathbb{E}_{z\sim P_z, \bm{a}\sim\bm{\pi}(\bm{a}|s, z)}[\mathbb{E}_{z'\sim P(\cdot|s, \bm{a})}\log P(z'|s, \bm{a})] \nonumber \\
&& + H(z), \label{eq: mutual information}
\end{eqnarray}
where $\bm{\pi}(\bm{a}|s, z)$ is the Cartesian product of individual policies after observing a specific $z$, 
and $P(\cdot|s, \bm{a})$ is the posterior distribution estimating the probability of a specific signal $z'$ after seeing the state $s$ and the joint action $\bm{a}$.
Note that $P(\cdot|s, \bm{a})$ is not the same as $P_z$.
Since we have no knowledge of the posterior distribution,
we circumvent it by introducing a variational lower bound \cite{barber2003algorithm,chen2016infogan,li2017infogail} which defines an auxiliary distribution $U(\cdot|s, \bm{a})$ as:
\begin{eqnarray}
\text{Eq.}~\eqref{eq: mutual information} & = & \mathbb{E}_{z\sim P_z, \bm{a}\sim\bm{\pi}(\bm{a}|s, z)}[D_{KL}(P(\cdot|s, \bm{a})||U(\cdot|s, \bm{a})] + \nonumber \\ 
&&\mathbb{E}_{z\sim P_z, \bm{a}\sim\bm{\pi}(\bm{a}|s, z)}[\mathbb{E}_{z'\sim P(\cdot|s, \bm{a})}\log U(z'|s, \bm{a})]  + H(z) \nonumber \\
&\geq& \mathbb{E}_{z\sim P_z, \bm{a}\sim\bm{\pi}(\bm{a}|s, z)}[\mathbb{E}_{z'\sim P(\cdot|s, \bm{a})}\log U(z'|s, \bm{a})], \label{eq: lower bound}
\end{eqnarray}
where $U(z|s, \bm{a})$ is an approximation of $P(z|s, \bm{a})$. 
The inequality operator in the last line holds due to the non-negative property of KL divergence and entropy as $D_{KL}(\cdot)\geq 0$ and $H(\cdot)\geq 0$.
Therefore, we derive a mutual information loss (MI loss) from Eq. \eqref{eq: lower bound} as
\begin{equation}
L_I(\bm{\pi}, U) = -\mathbb{E}_{z\sim P_Z, s\sim \tau, \bm{a}\sim\bm{\pi}(\bm{a}|s, z)}\log U(z|s, \bm{a}). \label{eq: mi loss}
\end{equation}
where $\tau$ is trajectories of game episodes.
Minimizing Eq. \eqref{eq: mi loss} facilitates agents to follow the instruction of the coordination signal.

Another important issue is how to model $P_z$.
Using a discrete signal space may raise differentiable problems in back-propagation and increase model complexity when encoding signals as one-hot vectors.
Instead, we propose to adopt a continuous signal space,
and approximate $P_z$ in a Monte Carlo way.
In detail, 
we sample a $D_z$-dimension continuous vector $v$ from a normal distribution,
$\mathcal{N}(\ve{0}, I^{D_z\times D_z})$,
and distribute it to all agents.
$v$ can be viewed as a value of the variable $z$.
Agents learn to divide the $R^{D_z}$ space into several subspaces,
with each corresponding to one signal and hence one optimal joint action.
The probability of sampling a specific $z$, $P_z(z)$, 
is approximated by the probability of sampling a vector $v$ that belongs to the corresponding subspace.

To compute $U(z|s, \bm{a})$,
we use a centralized multi-layer feed-forward network, named as U-Net, as a parameterized function, $f_U$.
U-Net inputs $s$ and $\bm{a}$, 
and outputs a continuous vector with the same dimension as $z$ as the reconstructed signal $z'=f_U(s, \bm{a})$.
$U(z|s, \bm{a})$ is measured by the mean squared error between $z'$ and $z$.
One obvious advantage of SIC is that it can be easily integrated with most existing models with policy networks,
as shown in Fig. \ref{fig:sic}.
We expect to optimize parameters of policy networks and U-net concurrently by minimizing $L_I$,
so that the learned joint action is an optimal one.
However, if agents adopt stochastic policies as in COMA, 
it is challenging to pass gradients to policy networks through sampled $\bm{a}$.
Therefore, we use a simplified approximation $U(z|s, \bm{h})$, 
where $\bm{h}$ is the concatenation of last layers of hidden vectors in policy networks.
Parameters of the centralized U-net, $\omega$, and parameters of decentralized policies, $\theta=\langle \theta_1, \theta_2, \cdots, \theta_n\rangle $, 
are jointly optimized as:
\begin{eqnarray}
&&\max_{\omega, \theta}\mathbb{E}_{\bm{o}\sim \tau, \bm{a}\sim\bm{\pi}, z\sim P_Z}[Q_i(\bm{o}, \bm{a}|z) - \alpha L_I(\bm{\pi}, U)], \label{eq: pg obj}
\end{eqnarray}
where 
$Q_i(\bm{o}, \bm{a}|z)$ is the centralized critic function,
and $\alpha>0$ is the hyperparameter for information maximization term.
$Q_i(\bm{o}, \bm{a}|z)$ can also be substituted with the advantage function used by COMA.
We do not share parameters among agents.
Note that $o_i$ may be a partial observation of agent $i$,
and additional communication mechanism can be introduced to ensure theoretical correctness of Eq. \eqref{eq: mi loss}.
However, we empirically show that in some partially observable environments,
e.g., particle worlds \cite{lowe2017multi}, 
where the agent can infer global state from its local observation,
SIC can still work with $o_i$.
When applied to Multi-agent Actor-Critic frameworks \cite{lowe2017multi}, 
the objective of updating critic remains unchanged.

\begin{figure}[t!]
	\centering
	\includegraphics[width=0.45\textwidth]{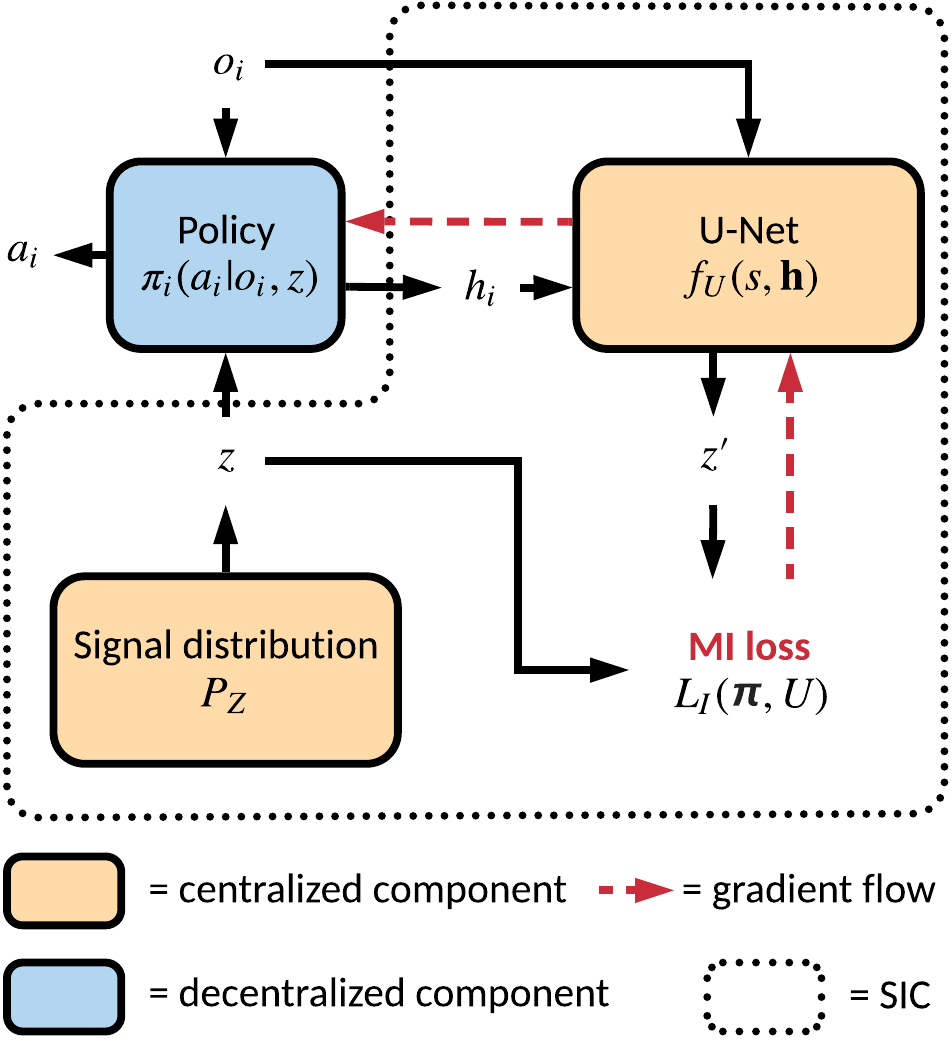}
	\caption{Illustration of SIC.
	Black arrows indicate how variables are passed between components.
	Note that in U-Net, 
	we use the concatenation of last hidden vectors in policy networks $h$, instead of $a$,
	to enable gradient flow when adopting stochastic policies.}
	\label{fig:sic}
\end{figure}

\section{Experiments}
In this section, we firstly evaluate SIC on a simple \textit{Rock-Paper-Scissors-Well} game to see whether SIC encourages coordination and thus leads to better performance. Then, we comprehensively study SIC in more challenging \textit{Predator-Prey} game under different scenarios.

\subsection{Rock-Paper-Scissors-Well}
We use a 2 vs. 2 variant of the matrix game, \textit{Rock-Paper-Scissors-Well} \cite{rpsw}, 
which is an extension to traditional \textit{Rock-Paper-Scissors}. 
Each team consists of two independent agents, and the available actions of each agent are $Access (A)$ and $Yield (Y)$. The joint action space consists of $(Y, Y)$, $(Y, A)$, $(A, Y)$, and $(A, A)$,  which can be named as $Rock$, $Paper$, $Scissors$ and $Well$. The former three actions play as in the traditional \textit{Rock-Paper-Scissors} game,  while \textit{Well} wins only against \textit{Paper} and is defeated by \textit{Rock} and \textit{Scissors}. The payoff matrix is presented in Table \ref{tab:reward matrix}.

Assume both teams are controlled by centralized controllers,
the best $\pi^C$ is
$P(Y,Y)=P(Y,A)=P(A,Y)=\frac{1}{3}$ and $P(A,A)=0$, 
since it is always better to take $(Y, A)$ instead of $(A, A)$.
To achieve this, 
agents within the same team need to coordinate to avoid the disadvantaged joint action $(A, A)$, 
and choose others uniformly randomly.
From a probabilistic perspective, the coordination requires high correlation between teammates, 
otherwise either $(A, A)$ is inevitable to appear as long as $\pi_1(A)\times \pi_2(A) > 0$, 
or the joint action space degenerates to $\{(Y,Y), (Y,A)\}$ or $\{(Y, Y), (A, Y)\}$.

\begin{table}[t!]
	\centering
	\caption{Payoff matrix of the 2 vs. 2 Rock-Paper-Scissors-Well game. 
	Both row and column players consist of two agents,
	who coordinate with individual actions as Y or A to play a joint action, e.g., Paper (Y, A),
	and receive shared rewards.
	Failed coordination will result in a bad joint policy vulnerable to the opponent team.}
	\label{tab:reward matrix}
	\begin{tabular}{cc|cccc}
		\toprule
		&& Rock & Paper & Scissors & Well \\
		&& Y, Y & Y, A & A, Y & A, A \\
		\midrule
		Rock & Y, Y & (0, 0) & (1, -1) & (-1, 1) & (1, -1) \\
		Paper & Y, A & (-1, 1) & (0, 0) & (1, -1) & (1, -1) \\
		Scissors & A, Y & (1, -1) & (-1, 1) & (0, 0) & (-1, 1) \\
		Well & A, A & (-1, 1) & (-1, 1) & (1, -1) & (0, 0) \\
		\bottomrule
	\end{tabular}
\end{table}

\begin{figure}[t!]
	\centering
	\begin{subfigure}{0.22\textwidth}
		\centering
		\includegraphics[width=\textwidth]{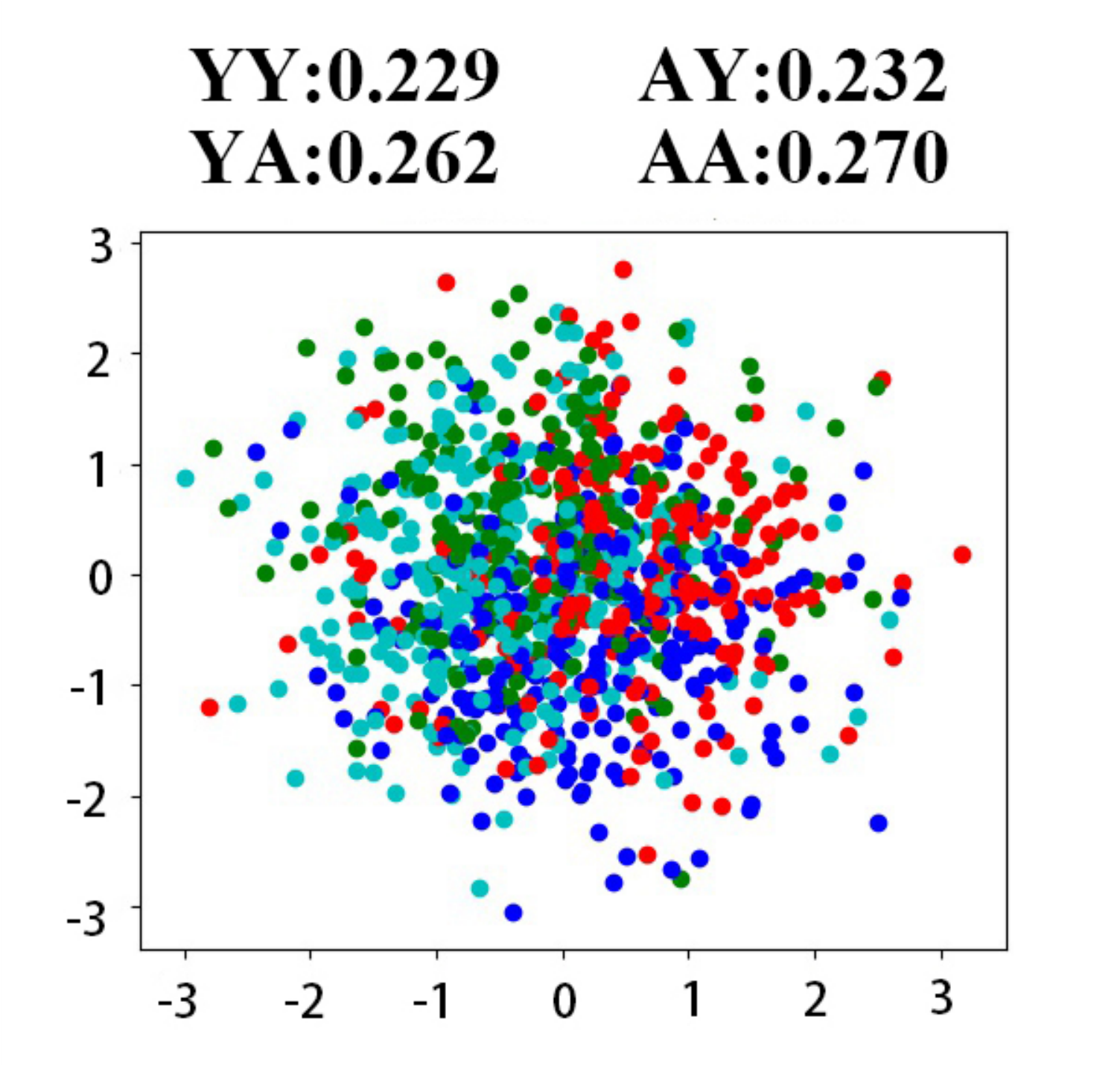}
		\subcaption{Row player (0 ep).}
	\end{subfigure}
	{\LARGE$\xrightarrow{}$}
	\begin{subfigure}{0.22\textwidth}
		\centering
		\includegraphics[width=\textwidth]{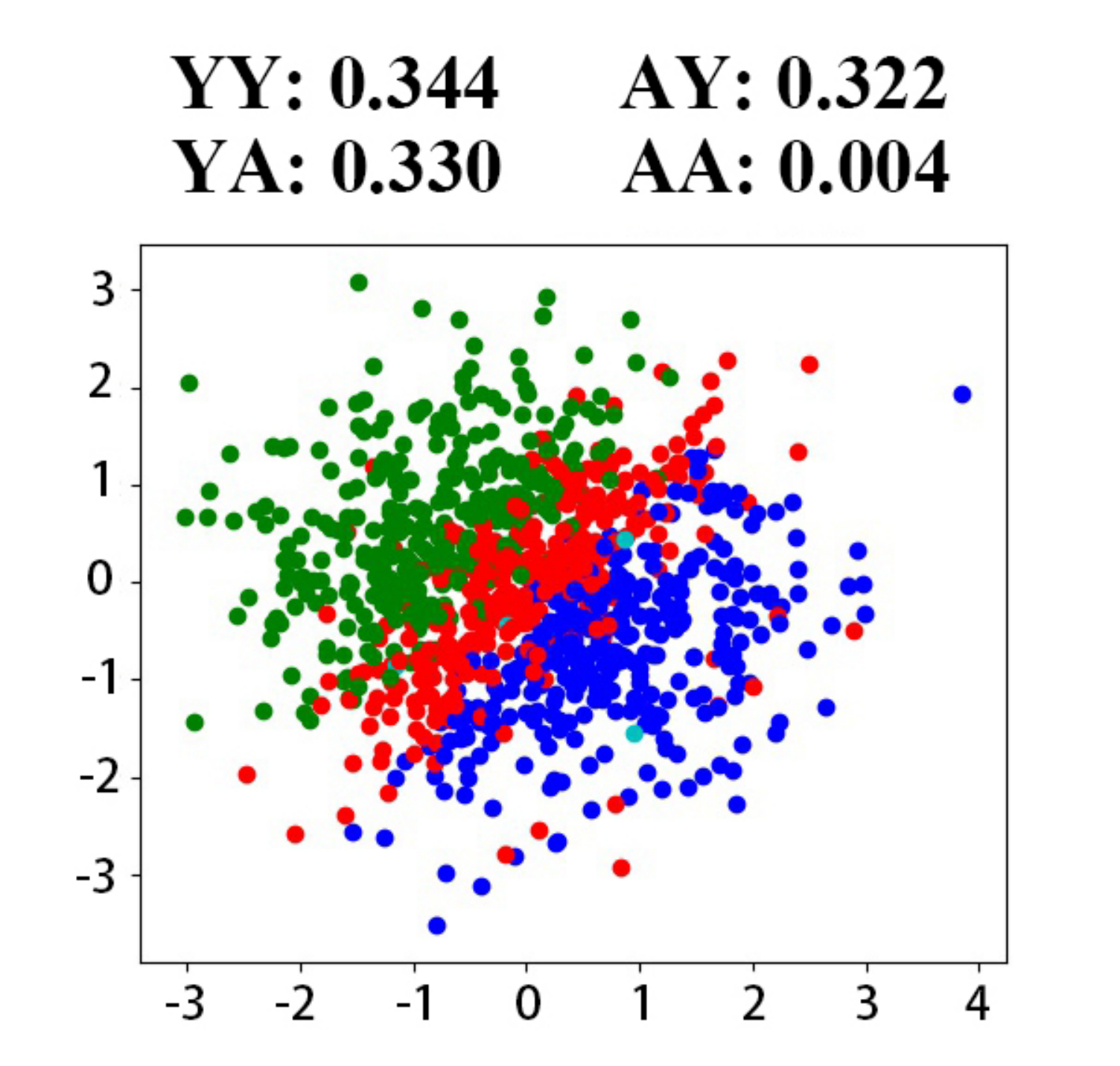}
		\subcaption{Row player (100k ep).}
	\end{subfigure}
	\\
	\begin{subfigure}{0.22\textwidth}
		\centering
		\includegraphics[width=\textwidth]{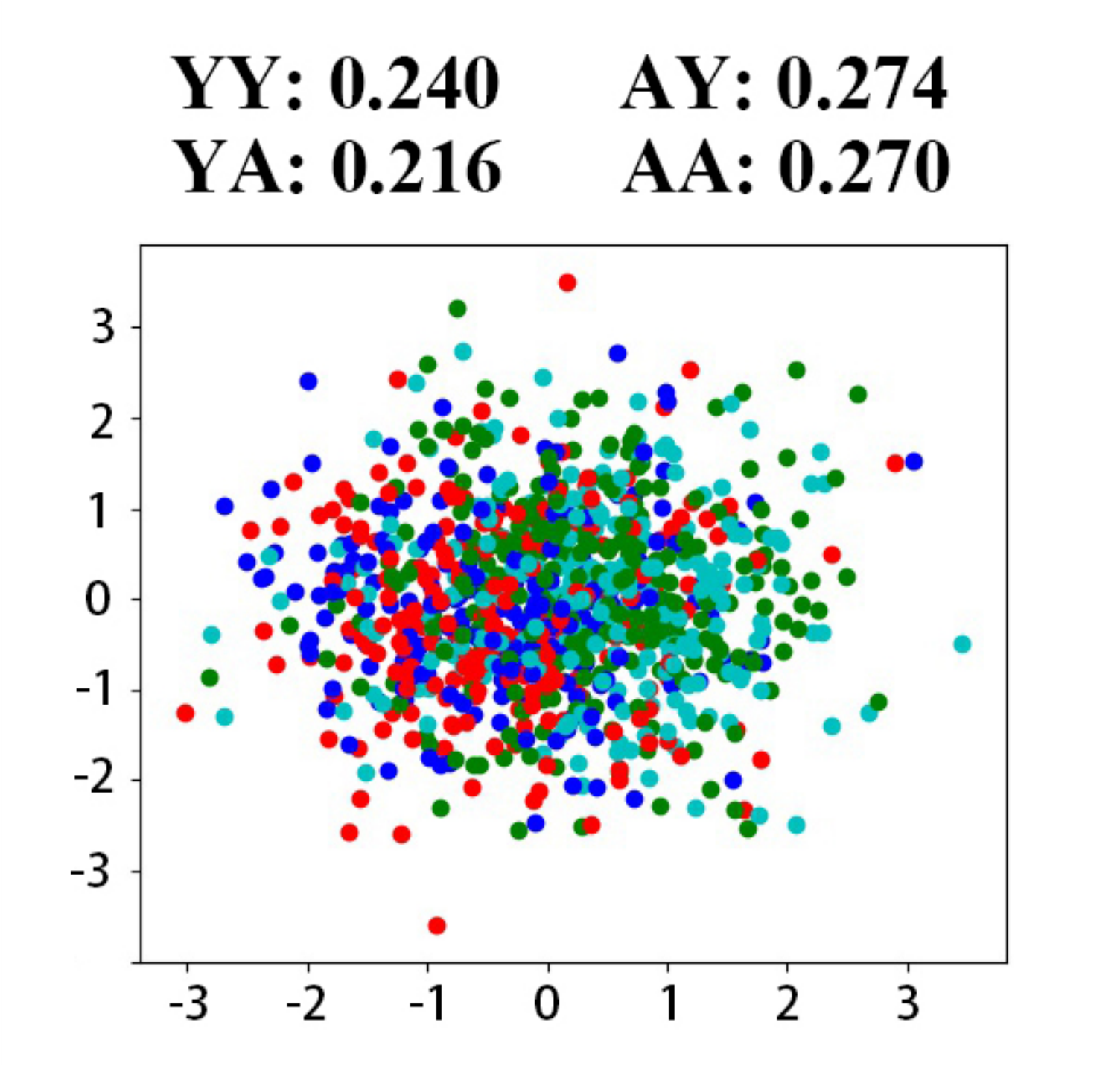}
		\subcaption{Col player (0 ep).}
	\end{subfigure}
	{\LARGE$\xrightarrow{}$}
	\begin{subfigure}{0.22\textwidth}
		\centering
		\includegraphics[width=\textwidth]{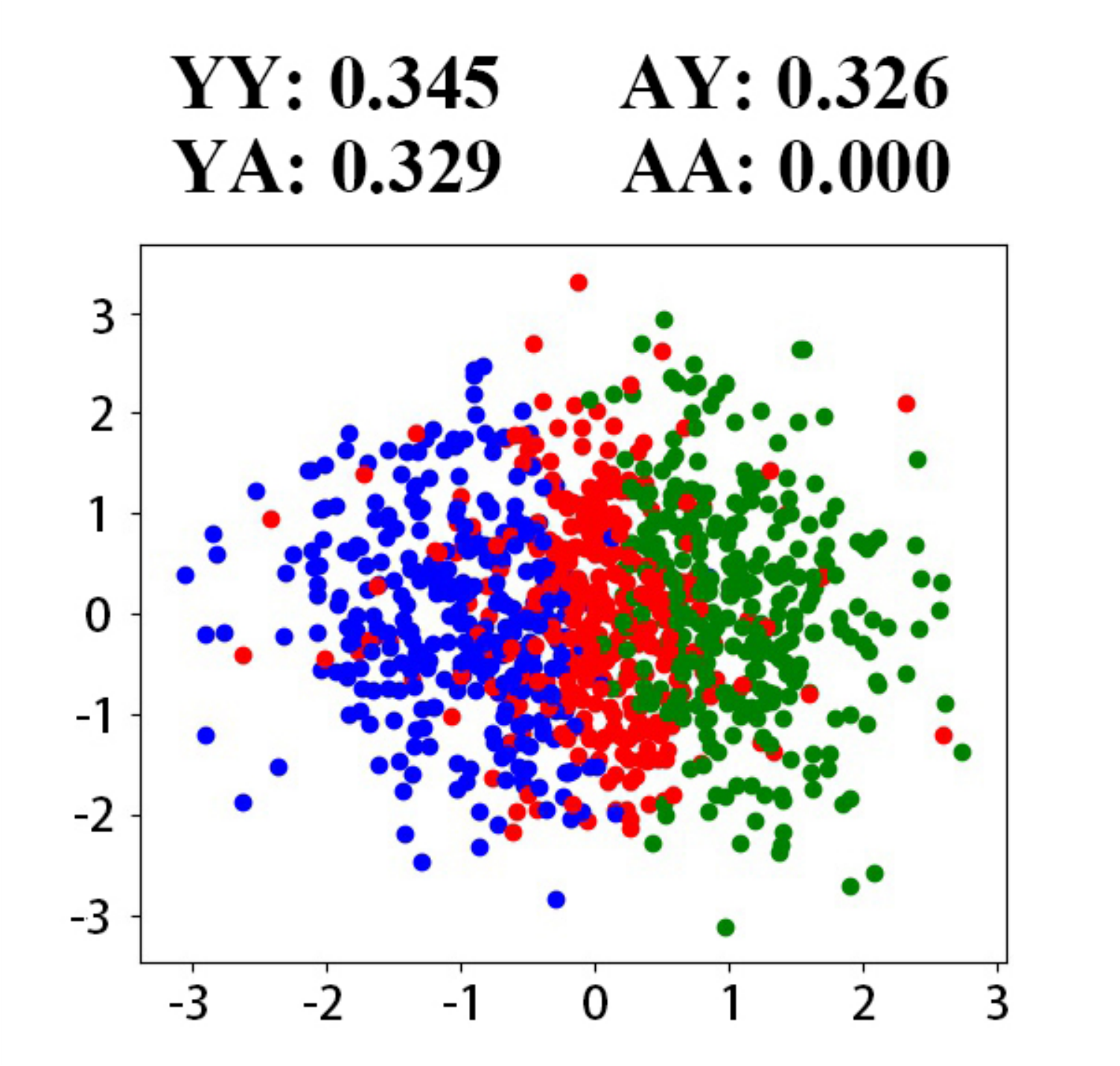}
		\subcaption{Col player (100k ep).}
	\end{subfigure}
	\caption{Correlations between signal distribution and joint actions. 
		Each point represents a 2-dim signal, and red, green, blue, and cyan represent the corresponding joint action as $(Y,Y)$, $(Y,A)$, $(A,Y)$, and $(A,A)$ respectively.
		The frequency of different actions in these 5000 points is shown above each sub-figure.}
	\label{fig:visualization_matrix_game}
\end{figure}

\subsubsection{One-step Matrix Game}
In the first scenario, we set the length of each episode as one.
We apply our SIC module to REINFORCE algorithm, and denote it as \textit{SIC-RE}.
The team-shared coordination signal $\ve{z}\in\mathbb{R}^2$ is sampled from $\mathcal{N}(\ve{0}, I)$.
Note that each agent takes a stochastic policy, 
and signals received by the two teams are different.
After training, the four agents converge to an equilibrium, 
with rewards of both row and column players equal to 0.
To better study which kind of equilibrium agents have reached, 
we randomly sample 5000 signals and test how agents respond to them before and after training.
Fig. \ref{fig:visualization_matrix_game} illustrates the joint policy of both teams. 
We can see that
\begin{enumerate}
	\item Before training (a\&c), 
	the frequency of each joint action is roughly 0.25 for both team, 
	since each agent takes a random individual policy.
	In addition, the distribution of signals triggering different joint actions are quite spreading.
	\item After training (b\&d), 
	the signal space is roughly divided into three ``zones'', 
	with each zone representing one joint action.
	The ``area'' of each zone, i.e., the probability of sampling one signal bolonging to the zone, is roughly $1/3$, 
	which indicates that the result is close to the best performance a centralized controller can achieve.
\end{enumerate}

\subsubsection{Multi-step Matrix Game}

In the second scenario, we set the length of the episode as 4 steps. 
We denote the matrix game in Table \ref{tab:reward matrix} as $M_4$, 
since we expect the fourth joint action to be deprecated by agents.
Then we generate a new matrix, $M_i$, 
by exchanging the fourth row with the $i$-th row, and the fourth column with the $i$-th column sequentially.
In this way we obtain a set of matrices $\mathcal{M}=\{M_1, M_2, M_3, M_4\}$.
We design a multi-step matrix game, 
where two teams play according to a random payoff matrix drawn from $\mathcal{M}$ in each step.
Each agent can only observe the ID $i\in\{1, 2, 3, 4\}$ of the current matrix and the coordination signal.
To simulate sparse rewards, 
we only give agents the sum of rewards on each step after an episode of game is finished,
and train them with discounted returns.

To evaluate SIC-RE, we use REINFORCE algorithm with fully independent agents as the baseline model, 
which we denote as \textit{IND-RE}. We conduct experiments and plot averaged reward curves of row players in Fig. \ref{fig:reward curves of multi-step matrix game}. 
We also plot joint policy curves of row players in $M_4$ (the same as Table \ref{tab:reward matrix}) of SIC-RE vs SIC-RE and IND-RE vs IND-RE in Fig. \ref{fig: example curves of multi-step matrix game} to study their reached equilibrium.
More curves can be found in Appendix \ref{sec: visualization}.
We can see that
\begin{enumerate}
	\item Even in multi-step scenarios with sparse rewards, 
	coordination signal can still coordinate agents as the centralized controller does.
	Fig. \ref{fig: example curves of multi-step matrix game}-a shows that joint policy of row players gradually converge to $P(Y,Y)=P(Y,A)=P(A,Y)=\frac{1}{3}$ and $P(A,A)=0$,
	and the response of row players is similar to Fig. \ref{fig:visualization_matrix_game}.
	\item
	Although IND-RE also reaches an equilibrium with a game value as 0 in IND-RE vs. IND-RE as shown in Fig. \ref{fig:reward curves of multi-step matrix game} and \ref{fig: example curves of multi-step matrix game}-b,
	its ability to coordinate is limited,
	as it can only formulate joint policy in $\Pi^D$.
	Therefore, in direct competition as SIC-RE vs IND-RE, 
	IND-RE is outperformed and stuck in a disadvantaged equilibrium with a negative averaged reward.
\end{enumerate}

\begin{figure}[t!]
	\centering
	\includegraphics[width=0.4\textwidth]{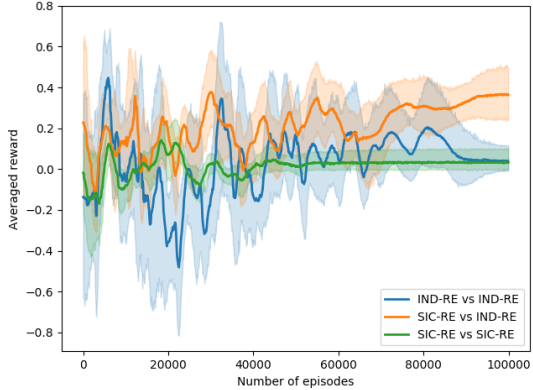}
	\caption{Average rewards of row players on 4-step matrix game. 
	    We report reward curves of (i) SIC-RE vs. SIC-RE, (ii) SIC-RE vs. IND-RE, and (iii) IND-RE vs. IND-RE with row players trained by the former model.
	}
	\label{fig:reward curves of multi-step matrix game}
\end{figure}

\begin{figure}[t!]
	\begin{subfigure}{0.42\textwidth}
		\centering
		\includegraphics[width=\textwidth]{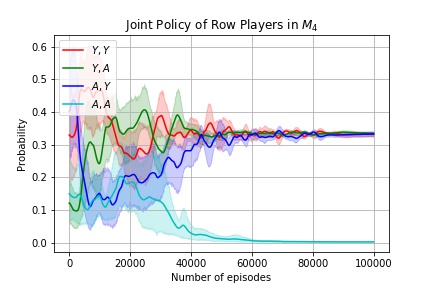}
		\subcaption{SIC-RE vs SIC-RE}
	\end{subfigure}
	
	\begin{subfigure}{0.42\textwidth}
		\centering
		\includegraphics[width=\textwidth]{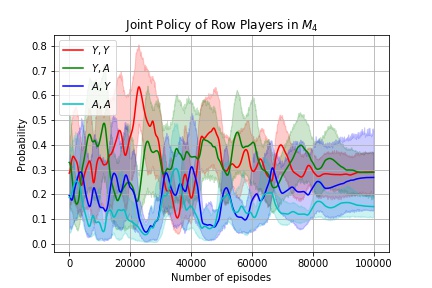}
		\subcaption{IND-RE vs IND-RE}
	\end{subfigure}
	\caption{Example joint policy curves of two settings in multi-step matrix game.}
	\label{fig: example curves of multi-step matrix game}
\end{figure}

\subsection{Predator-Prey}
In this section, we evaluate SIC on \textit{Predator-Prey} games \cite{benda1986optimal,matignon2012independent,lowe2017multi}, 
which is a common task to study coordination of multi-player games.  
We customize our Predator-Prey game based on the implementation in \cite{lowe2017multi}.
In this game, 
$M$ slow \textit{predators} and $M$ fast \textit{preys} are randomly placed in a two-dimensional world with $L=2$ large landmarks impeding the way,
and predators need to collaborate to collide with another team of agents, preys.
Each agent is depicted by several attributes including its coordinates and velocity while landmarks are depicted by the location only.
The observation of each agent includes its own attributes and egocentric attributes of other agents and landmarks. 
The action of each agent contains 4 moving directions and a stop action. 
Whenever a collision happens, 
all predators are equally rewarded while all preys are equally penalized immediately.

\begin{table*}[htbp]
	\centering
	\caption{Cross-comparison among different models.
	Results are reported in terms of predator scores, 
	which is proportional to the number of collisions.
	Each row shows results against the same prey;
	the higher the score, 
	the better the predator model,
	and the bold value is the highest one.
	Each column shows results against the same predator;
	the lower the score, 
	the better the prey model, 
	and the underlined value is the lowest one.
	}
	\label{tab:cross comparison}
	\begin{subtable}{\textwidth}
	\centering
	\subcaption{2 vs. 2 Predator-Prey game.}
	\begin{tabular}{c|ccccc}
		\toprule
		\backslashbox{Preys}{Predators} & COMA & MADDPG &SIC-COMA & SIC-MA &  SIC-MA (w/o $L_I$) \\
		\midrule
		COMA 	 &7.38 $\pm$ 2.58 & 132.27 $\pm$ 9.93 & 6.35 $\pm$ 1.87&\textbf{139.27} $\pm$ 7.45 & 133.63 $\pm$ 7.22 \\
		MADDPG	 &0.37 $\pm$ 0.14& 3.07 $\pm$ 0.65&0.56 $\pm$ 0.15&\textbf{3.32} $\pm$ 0.47 &3.14 $\pm$ 0.56 \\
		SIC-COMA  &6.76 $\pm$ 1.18&139.37 $\pm$ 13.38&5.44 $\pm$ 1.04&\textbf{145.11} $\pm$ 11.55&140.82 $\pm$ 15.11\\
		SIC-MA 	 &\underline{0.34} $\pm$ 0.14& \underline{2.81} $\pm$ 0.44&\underline{0.37} $\pm$ 0.14& \underline{\textbf{3.15}} $\pm$ 0.25&\underline{3.13} $\pm$ 0.44\\
		\bottomrule
	\end{tabular}
	\end{subtable}
	\vspace{0.2cm}
	
	\begin{subtable}{\textwidth}
	\centering
	\subcaption{4 vs. 4 Predator-Prey game.}
	\begin{tabular}{c|ccccc}
		\toprule
		\backslashbox{Preys}{Predators}  & COMA & MADDPG &SIC-COMA & SIC-MA &  SIC-MA (w/o $L_I$) \\
		\midrule
		COMA   &21.8 $\pm$ 3.3& 76.3 $\pm$ 12.7 &  25.3 $\pm$ 4.5 &\textbf{78.6} $\pm$ 13.1  & 75.9 $\pm$ 10.2 \\
		MADDPG	 & 21.1 $\pm$ 2.2 &41.3 $\pm$ 3.9&  21.9 $\pm$ 2.6 &\textbf{42.2} $\pm$ 4.7 &  39.6 $\pm$ 6.6\\
		SIC-COMA  & \underline{20.1} $\pm$ 2.0 & 57.3 $\pm$ 8.8  & 21.6 $\pm$ 2.7 & \textbf{58.2} $\pm$ 8.4 &  56.5 $\pm$ 9.2 \\
		SIC-MA 	 & 20.5 $\pm$ 1.9 & \underline{37.3} $\pm$ 3.7& \underline{21.2} $\pm$ 2.1 &  \underline{\textbf{41.5}} $\pm$ 5.2 & \underline{38.8} $\pm$ 6.0\\
		\bottomrule
	\end{tabular}
	\end{subtable}
\end{table*}

We use following baseline models as baselines:
\begin{enumerate}
    \item \textbf{COMA} \cite{foerster2018counterfactual}: 
    Each agent learns an individual policy, $\pi_i(a_i|\tau_i)$,
    where $\tau_i$ is the local action-observation history, 
    and a centralized action-value critic, $Q_i(\bm{o}, \bm{a})$. 
    The advantage function used when updating policy networks, $A_i(\bm{s}, a_i, \bm{a}_{-i})$, 
    is computed as $Q_i(s, \bm{a})-\sum_{a_i}\pi_i(a_i|\tau_i)Q_i(\bm{o}, \bm{a})$.
    \item \textbf{MADDPG} \cite{lowe2017multi}: 
    Each agent learns a deterministic individual policy, $\mu_i(a_i|o_i)$ and a centralized action-value critic, 
    $Q_i(\bm{o}, \bm{a})$.
    To make policies differentiable in the discrete action space, 
    we adopt the Gumbel-softmax trick as \cite{lowe2017multi} does.
\end{enumerate}
We integrate SIC with baselines and denote them as SIC-MA and SIC-COMA.
We use hyper-parameters of neural networks, 
like the number of units and activation functions,
in the original paper,
and inherit them in SIC variants.
To evaluate the performance of different models,
we conduct cross-comparison among them and report predator scores with 10 random seeds.
The results are shown in Table \ref{tab:cross comparison}, 
where the bold value is the highest score in a row, 
and the underlined value is the lowest score in a column.
We can see that: 
\begin{enumerate}
	\item SIC-MA significantly outperforms all other models, 
	both as predators and preys.
	In addition, the application of SIC presents stable improvements when integrated into different models.
	\item Ablation analysis shows that $L_I$ is a strong constraint to enforce agents to coordinate.
	Absence of $L_I$ is harmful to the performance of models, 
	although minor improvements may be observed.
\end{enumerate}

\begin{figure*}[t!]
	\centering
	\begin{subfigure}{0.31\textwidth}
		\centering
		\includegraphics[width=\textwidth]{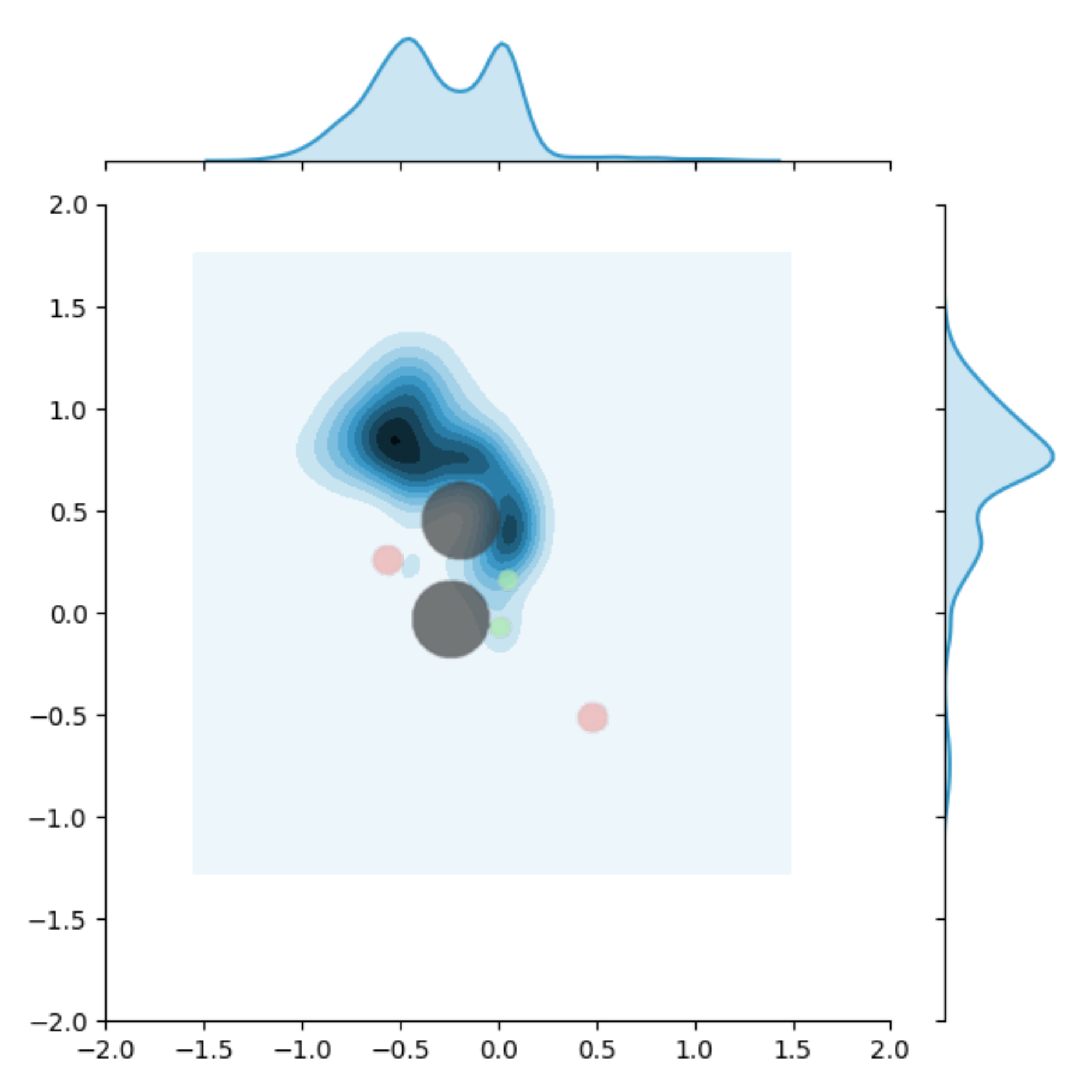}
		\subcaption{MADDPG vs MADDPG\\ 
		(3068 collisions)}
	\end{subfigure}
	\begin{subfigure}{0.31\textwidth}
		\centering
		\includegraphics[width=\textwidth]{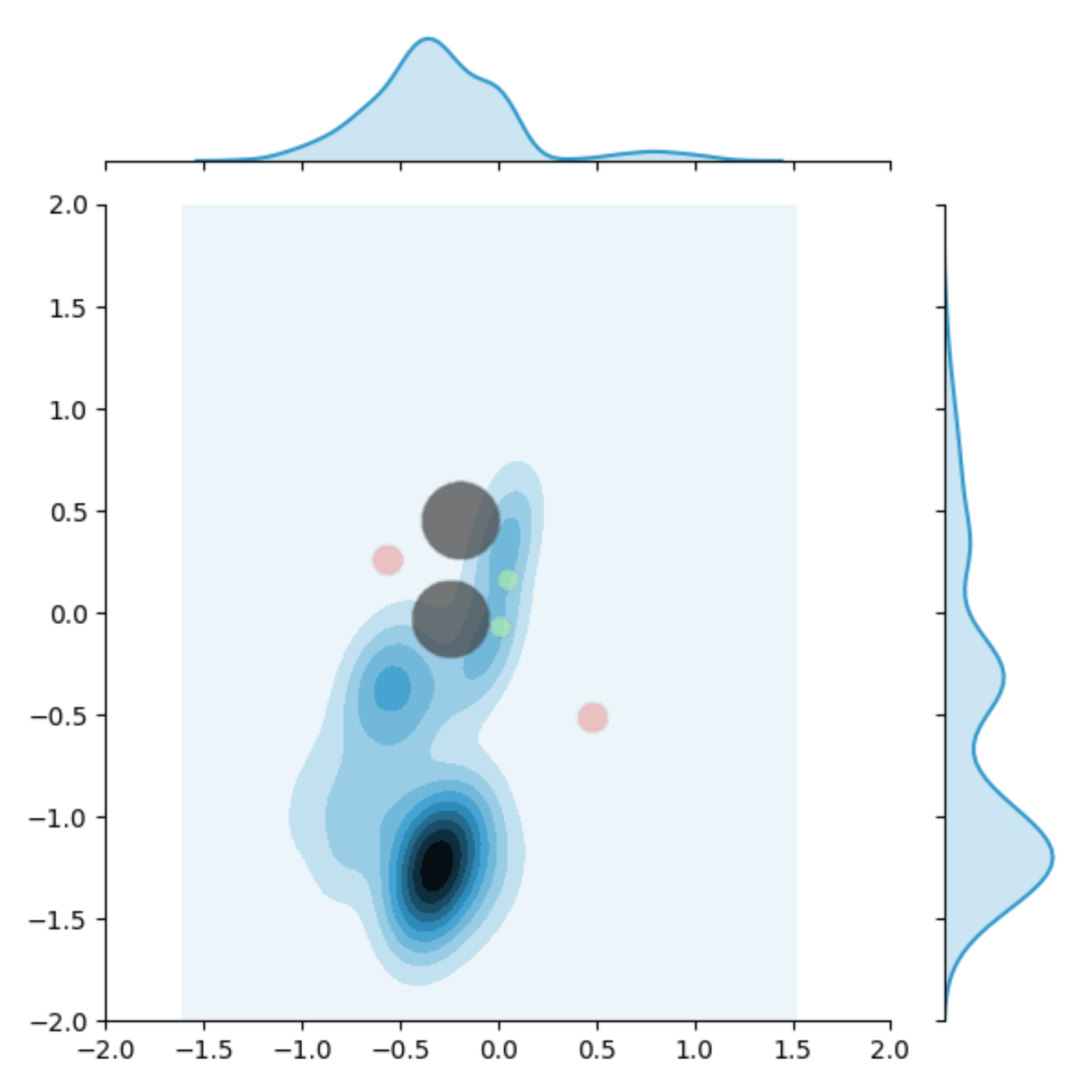}
		\subcaption{MADDPG vs SIC-MA\\ 
		(2041 collisions)}
	\end{subfigure}
	\begin{subfigure}{0.31\textwidth}
		\centering
		\includegraphics[width=\textwidth]{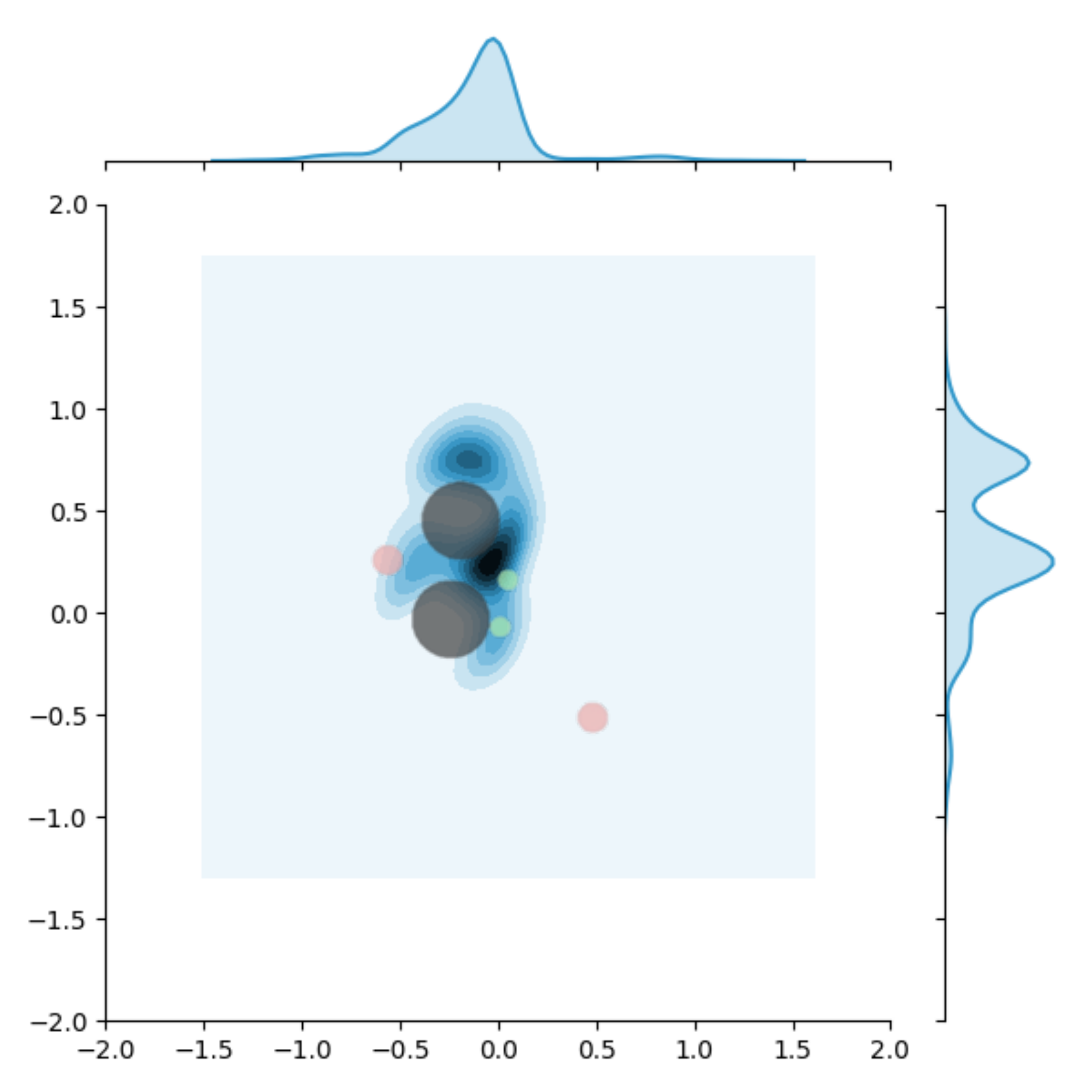}
		\subcaption{SIC-MA vs MADDPG\\ 
		(3806 collisions)}
	\end{subfigure}
	\caption{The density and marginal distribution of collision's positions, $(x, y)$, in 10000 repeated games.
		We repeat experiments from the same randomly generated environment, 
		with yellow, green and black circles representing predators, preys and landmarks at the start of each game.
		The result shows that compared to MADDPG,
		SIC-MA presents more diverse strategy and better performance both as predators and preys.
		}
	\label{fig:case study}
\end{figure*}

Besides comparing directly on the testing reward, we visualize the distribution of collision positions in Fig. \ref{fig:case study}.
We repeat MADDPG vs. MADDPG, SIC-MA vs. MADDPG and MADDPG vs. SIC-MA for 10000 games with different seeds. 
We reset each game to the same state in each episode, and collect positions of collisions in the total 250000 steps. We plot the density distribution of positions, $(x, y)$, marginal distributions of $x$ and $y$, and the initial particle world in Fig. \ref{fig:case study}. 
Data points $(x, y)$ with higher frequency in collected data are colored darker in the plane, 
and values of its components $x$ and $y$ in marginal distributions is higher.
Note that color and height reflect density instead of the number of times,
and cannot be compared between sub-figures directly.
In Fig. \ref{fig:case study}-a,  most collisions happen in the top or right parts around the upper landmark, which reflects that both predators and preys as running to the top-left corner. 
In Fig. \ref{fig:case study}-b, when SIC-MA controls preys, most collisions happen in the middle to the bottom of the whole map, 
showing that preys' trajectories cover a large area. 
When SIC-MA plays predators as in Fig. \ref{fig:case study}-c, 
collisions also appears in more diverse positions, 
and we observe that the right predator sometimes drive preys into the aisle between two landmarks, where the left predator is ambushing to capture them.
Specifically, 3806 and 2041 collisions happen when SIC-MA controls predators and preys respectively while 3068 collisions happen in MADDPG vs. MADDPG. In other words,  as predators, SIC captures more preys, and as preys, it avoids being captured more effectively than MADDPG. 
This evidences that SIC-MA learns better policies compared to MADDPG.

\begin{figure}[t!]
	\centering
	\includegraphics[width=0.35\textwidth]{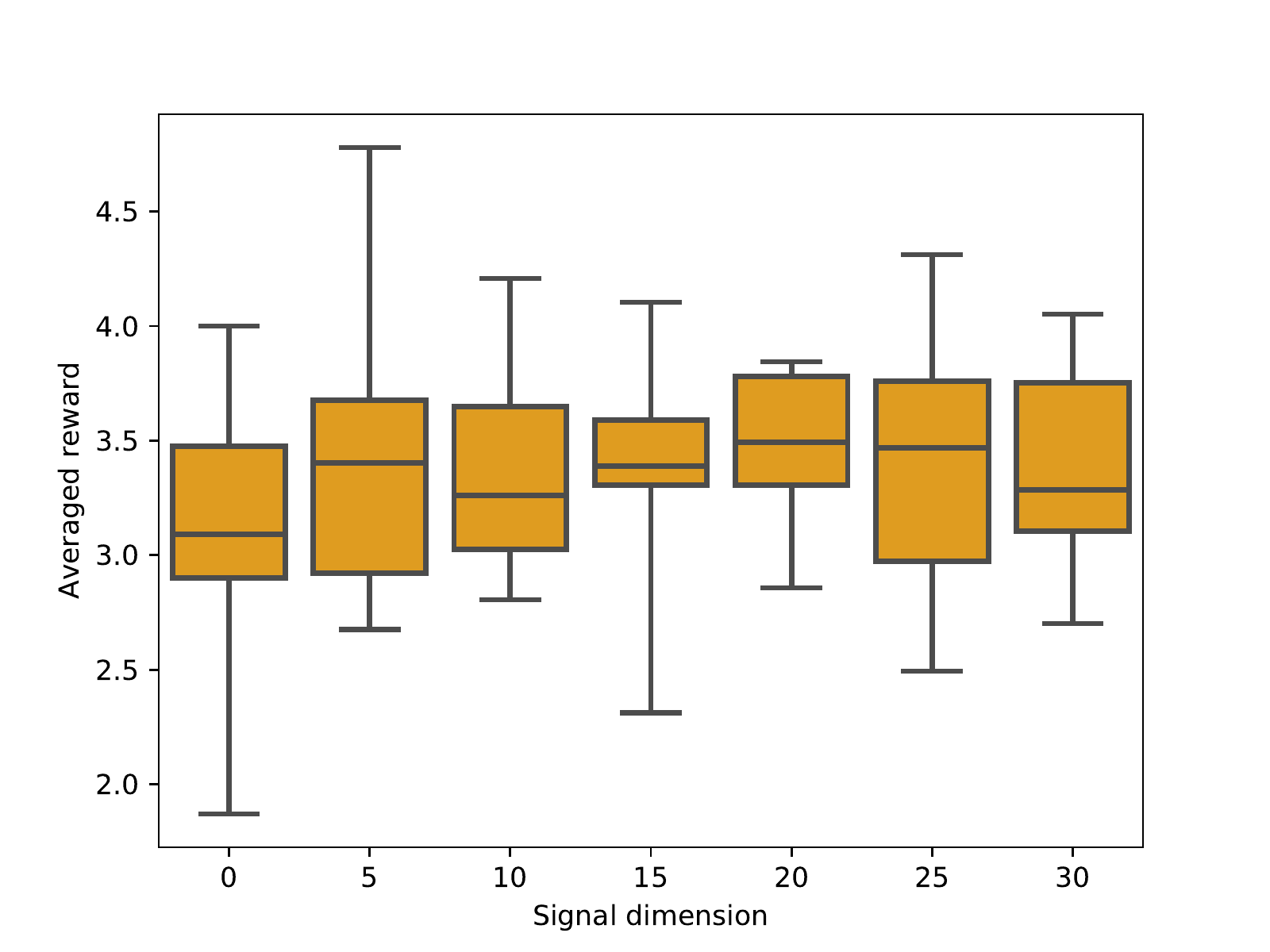}
	\caption{Results of parameter sensitivity analysis on $D_z$.}
	\label{fig:sensitivity}
\end{figure}

\subsection{Parameter Sensitivity} \label{sec: parameter sensitivity}
We conduct a parameter sensitivity analysis by testing SIC-MA vs MADDPG with different $D_z$ in 2 vs. 2 predator-prey environment, 
and report results in Fig. \ref{fig:sensitivity}. 
Note that SIC-MA with $D_z=0$ is equal to MADDPG,  and $D_z=20$ is the setting used in Table \ref{tab:cross comparison}. We can see that SIC-MA presents a stable improvement over MADDPG.
Most importantly, 
empirical results show that approximation through neural networks can compress $\mathcal{Z}$ and ensure good performance. 

\section{Related Works}

Unlike previous studies \cite{busoniu2008comprehensive} on MARL that adopt tabular methods and focus on coordination in simple environments,
recent works adopt deep reinforcement learning framework \cite{li2017deep} and turn to complex scenarios with high dimensional state and action spaces like particle worlds \cite{lowe2017multi} and StarCraft II \cite{vinyals2017starcraft}.
Among different approaches to model the controlling of agents, \textit{centralized training with decentralized execution} \cite{oliehoek2008optimal,lowe2017multi} outperforms others for circumventing the exponential growth of joint action space and the non-stationary environment problem \cite{li2017deep}.
Emergent communication \cite{lowe2019pitfalls} is proposed to enhance coordination and training stability.
Communication allows agents to pass messages between agents and ``share'' their observations via communication vectors.
It is essential to cooperative referential games \cite{foerster2016learning,lowe2017multi}, or the non-situated environment \cite{wagner2003progress}, 
where each agent has its own specialization as \textit{speaker} or \textit{listener} and the exchange of information is crucial to the  games.
A more widely-studied type of environments is the situated environment, 
where agents have similar roles and non-communicative actions.
\cite{sukhbaatar2016learning,peng2017multiagent} design special architectures to share information among all agents.
The {noisy channel} problem arises when all other agents use the same communication channel to send information simultaneously, 
and the agent needs to distinguish useful information from useless or irrelevant noise.
To alleviate this problem, 
\cite{jiang2018learning,das2018tarmac,iqbal2018actor} propose to introduce the attention mechanism to control the bandwidth of different agents dynamically.
However, 
communication requires large bandwidth to exchange information,
and the effectiveness of communication is under question as discussed by \cite{lowe2019pitfalls}.

The coordination problem \cite{boutilier1999sequential}, or the Pareto-Selection problem \cite{matignon2012independent}, has been discussed by a series of works in fully cooperative environments.
The solution to the coordination problem requires strong coordination among agents, i.e., 
all agents act as if in a fully centralized way.
In the game theory domain, it can also be viewed as pursuing \textit{Correlated equilibrium} (CE) \cite{leyton2008essentials,aumann1974subjectivity}, 
where agents make decisions following instructions from a correlation device.
It is desired that agents in the system can establish correlation protocols through adaptive learning method instead of constructing a correlation device manually for specific tasks
\cite{greenwald2003correlated} proposes to replace the value function in Q-learning with a new one reflecting agents' rewards according to some CE.
\cite{zhang2013coordinating} maintains coordination sets and select coordinated actions within these sets.
Apart from these methods, centralized signal is adopted by a variety of works \cite{cigler2011reaching,cigler2013decentralized,farina2018ex}.

Mutual information measures the mutual dependence between two variables, 
and has been used to enforce an information theoretic regularization by a variety of works in different domains \cite{barber2003algorithm,li2017infogail,chen2016infogan,eysenbach2018diversity}.
\cite{li2017infogail,chen2016infogan} use it to model the relationship between latent codes and outputs in generative models.
\cite{eysenbach2018diversity} proposes to substitute reward function with a mutual information objective to train policies unsupervisedly in the single-agent RL domain.
It maximizes mutual information between signal and state, 
and focuses on the diversity of the learned policy.

A similar concept to our coordination signal is \textit{common knowledge}, which refers to common information, 
e.g., representations of states, among partially observable agents.
Common knowledge is used to enhance coordination \cite{thomas2014psychology,foerster2018multi} and combined with communication \cite{korkmaz2014collective}.
Among them, \cite{foerster2018multi} proposes MACKRL which introduces a random seed as part of common knowledge to guide a hierarchical policy tree.
To avoid exponential growth of model complexity, MACKRL restricts correlation to pre-defined patterns, e.g., a pairwise one, 
which is too rigid for complex tasks.

\section{Conclusions}
We present the drawback of popular decentralized execution framework, 
and propose a signal instructed paradigm, 
which theoretically can coordinate decentralized agents as manipulated by a centralized controller.
We propose Signal Instructed Coordination (SIC), 
a novel module to enhance coordination of agents' policies in centralized training with decentralized execution framework,
SIC instructs agents by sampling and sending common signals to cooperative agents,
and incentivize their coordination by enforcing a mutual information regularization.
Our analysis show with the help of SIC, 
the joint policy of decentralized agents demonstrates better performance.


\newpage
\bibliographystyle{ACM-Reference-Format}  
\bibliography{sample-bibliography}  

\newpage
\appendix
\section{Experimental Results}

\subsection{2v2 Predator-Prey Experiment}

In 2v2 Predator-Prey experiment, we adopt 4 different models: MADDPG, SIC-MADDPG, COMA, SIC-COMA.
For both MADDPG and SIC-MADDPG, we use the same hyper-parameters with the original MADDPG paper except for the learning rate which is set to be $0.001$ for MADDPG and $0.0005$ for SIC-MADDPG. In SIC-MADDPG, we add a 20-dimensional signal, a U-Net with a ReLU MLP with 64 hidden units, and set the coefficient of MI loss to be $0.0001$. 
In COMA, we use the Adam optimizer with a learning rate of $0.00005$.
Gradient clipping is set to be 0.1. Both Actor and Critic are parameterized by a two-layer ReLU MLP with 64 units per layer which is the same with MADDPG.
We adopt GAE with $\gamma=0.99$ and $\lambda=0.8$. 
We use a batch size of $1000$. 
SIC-COMA adopts the same hyper-parameters as in COMA and the same signal, U-Net and MI loss with SIC-MADDPG.
For all models, we train with 10 random seeds.

\subsection{4v4 Predator-Prey Experiment}

In 4v4 Predator-Prey experiment, We adopt the same models and parameters with 2v2 case, with the modifications as follows:
The learning rates of MADDPG and SIC-MADDPG are both set to be $0.0005$. 
The coefficient of MI loss in SIC-MADDPG is set to be 0.01 in 4v4 case.

\subsection{Matrix Game Experiment}

We conduct three multi-step matrix game experiments with SIC-RE and IND-RE models.
For both models, we use the Adam optimizer with a learning rate of $0.0001$. 
The policy network is parameterized by a one-layer ReLU MLP with 8 hidden units.
For SIC-RE models, we use, a two-layer ReLU MLP with 8 hidden units as U-Net, and set the coefficient of MI loss to be $0.01$.

\section{Visualization for Joint Policy of Multi-step Matrix Game} \label{sec: visualization}
We plot the curves of joint policies of both row players and column players in multi-step matrix games in Fig. \ref{fig: sic_vs_sic}, \ref{fig: sic_vs_ind}, and \ref{fig: ind_vs_ind}.

\begin{figure*}[htb]
	\centering
	\begin{subfigure}{0.42\textwidth}
		\centering
		\includegraphics[width=\textwidth]{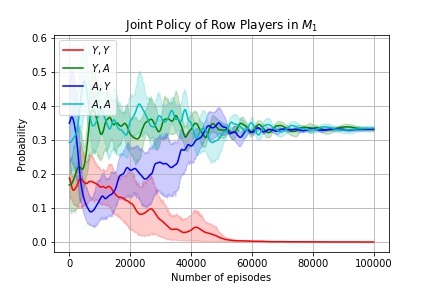}
	\end{subfigure}
	\begin{subfigure}{0.42\textwidth}
		\centering
		\includegraphics[width=\textwidth]{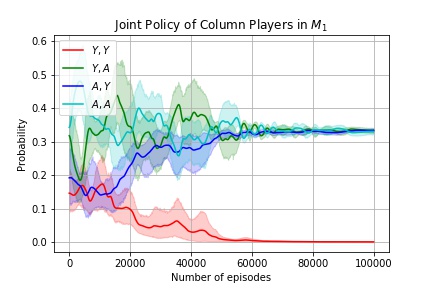}
	\end{subfigure}

	\begin{subfigure}{0.42\textwidth}
		\centering
		\includegraphics[width=\textwidth]{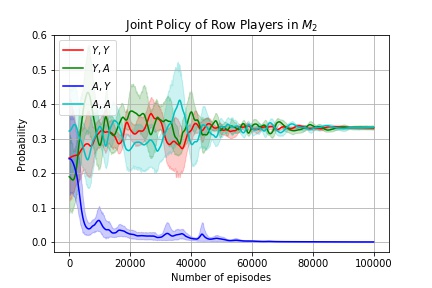}
	\end{subfigure}
	\begin{subfigure}{0.42\textwidth}
	\centering
	\includegraphics[width=\textwidth]{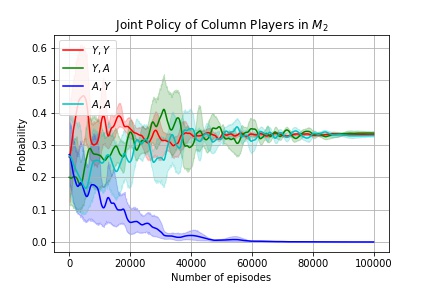}
	\end{subfigure}

	\begin{subfigure}{0.42\textwidth}
		\centering
		\includegraphics[width=\textwidth]{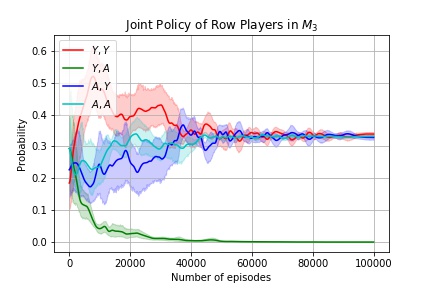}
	\end{subfigure}
	\begin{subfigure}{0.42\textwidth}
		\centering
		\includegraphics[width=\textwidth]{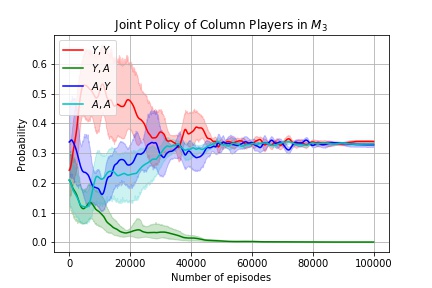}
	\end{subfigure}

\begin{subfigure}{0.42\textwidth}
	\centering
	\includegraphics[width=\textwidth]{SIC-RE_vs_SIC-RE_Row_Players_M4.jpg}
\end{subfigure}
	\begin{subfigure}{0.42\textwidth}
		\centering
		\includegraphics[width=\textwidth]{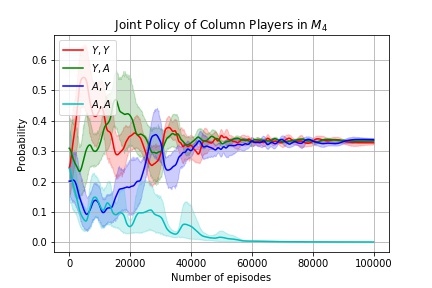}
	\end{subfigure}
	\caption{Joint Policy of SIC-RE vs SIC-RE. 
		During training, 
		the $i$-th joint action in $M_i$ is deprecated gradually,
		and all other joint actions are sampled uniformly randomly.}
	\label{fig: sic_vs_sic}
\end{figure*}

\begin{figure*}[htb]
	\centering
	\begin{subfigure}{0.42\textwidth}
		\centering
		\includegraphics[width=\textwidth]{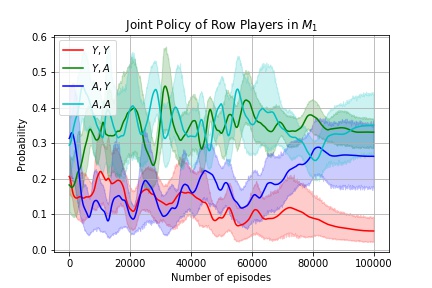}
	\end{subfigure}
	\begin{subfigure}{0.42\textwidth}
	\centering
	\includegraphics[width=\textwidth]{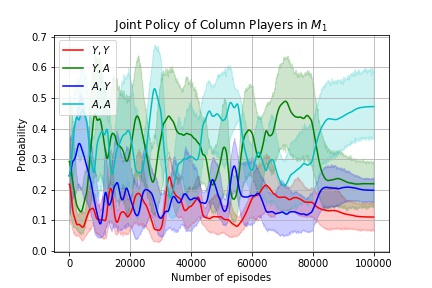}
\end{subfigure}

	\begin{subfigure}{0.42\textwidth}
		\centering
		\includegraphics[width=\textwidth]{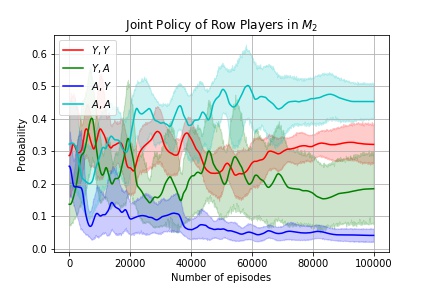}
	\end{subfigure}
\begin{subfigure}{0.42\textwidth}
	\centering
	\includegraphics[width=\textwidth]{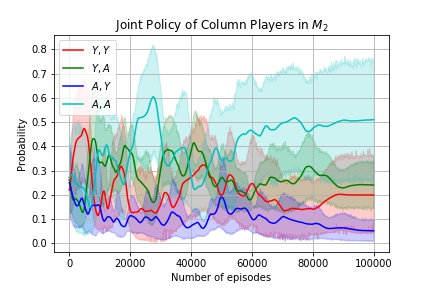}
\end{subfigure}

	\begin{subfigure}{0.42\textwidth}
		\centering
		\includegraphics[width=\textwidth]{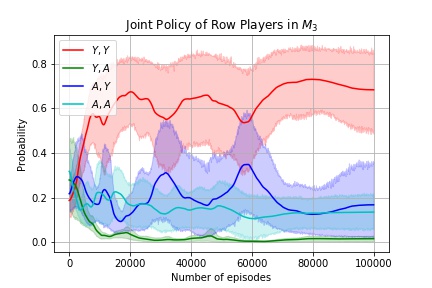}
	\end{subfigure}
\begin{subfigure}{0.42\textwidth}
	\centering
	\includegraphics[width=\textwidth]{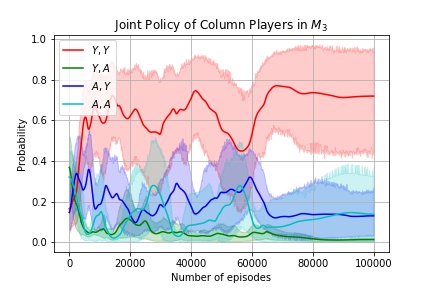}
\end{subfigure}

	\begin{subfigure}{0.42\textwidth}
		\centering
		\includegraphics[width=\textwidth]{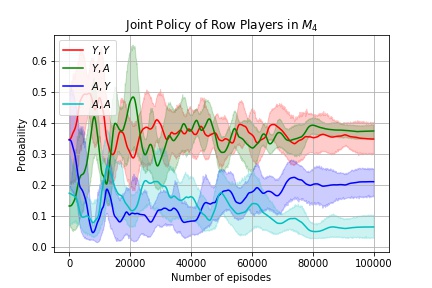}
	\end{subfigure}
	\begin{subfigure}{0.42\textwidth}
		\centering
		\includegraphics[width=\textwidth]{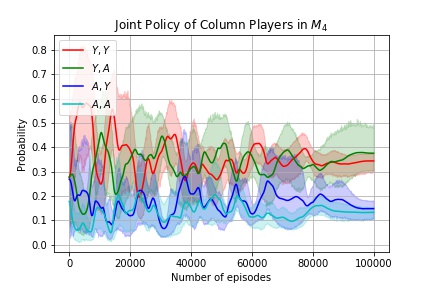}
	\end{subfigure}
	\caption{Joint Policy of SIC-RE vs IND-RE. 
		SIC-RE adjusts its joint policy to counter that of IND-RE, 
		and achieves a positive game value. }
	\label{fig: sic_vs_ind}
\end{figure*}

\begin{figure*}[htb]
	\centering
	\begin{subfigure}{0.42\textwidth}
		\centering
		\includegraphics[width=\textwidth]{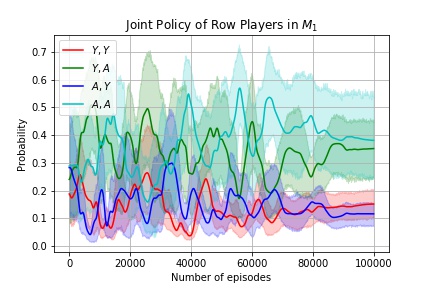}
	\end{subfigure}
	\begin{subfigure}{0.42\textwidth}
		\centering
		\includegraphics[width=\textwidth]{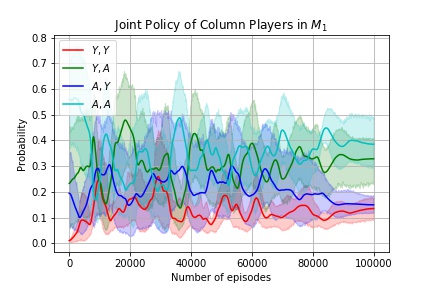}
	\end{subfigure}
	
	\begin{subfigure}{0.42\textwidth}
		\centering
		\includegraphics[width=\textwidth]{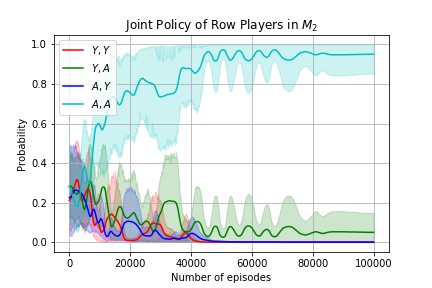}
	\end{subfigure}
	\begin{subfigure}{0.42\textwidth}
		\centering
		\includegraphics[width=\textwidth]{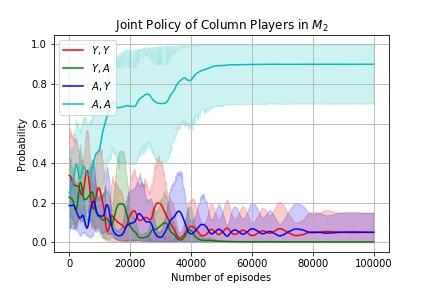}
	\end{subfigure}
	
	\begin{subfigure}{0.42\textwidth}
		\centering
		\includegraphics[width=\textwidth]{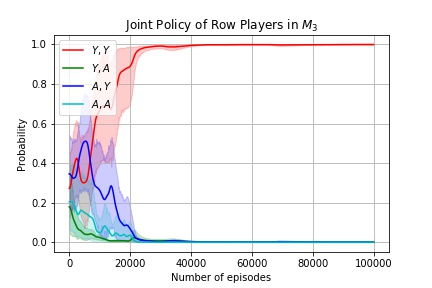}
	\end{subfigure}
	\begin{subfigure}{0.42\textwidth}
		\centering
		\includegraphics[width=\textwidth]{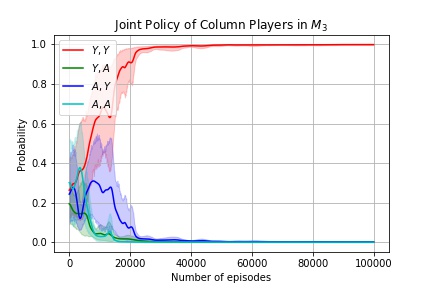}
	\end{subfigure}\
	
	\begin{subfigure}{0.42\textwidth}
		\centering
		\includegraphics[width=\textwidth]{IND-RE_vs_IND-RE_Row_Players_M4.jpg}
	\end{subfigure}
	\begin{subfigure}{0.42\textwidth}
		\centering
		\includegraphics[width=\textwidth]{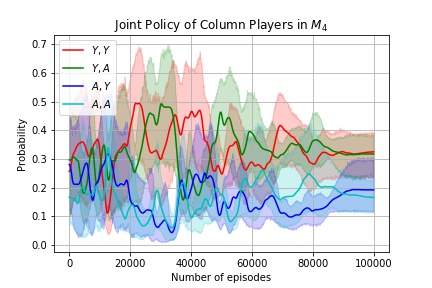}
	\end{subfigure}
	\caption{Joint Policy of IND-RE vs IND-RE.
		IND-RE only finds worse joint policy in the team-policy space, and 
		in some cases ($M_2$ and $M_3$), players play only one kind of joint action.}
	\label{fig: ind_vs_ind}
\end{figure*}

\end{document}